\newtheorem{proposition}{Proposition}
\theoremstyle{nonumberplain}
\newtheorem{proof}{Proof}
\begin{document}
%
\title{Newtonalized Orthogonal Matching Pursuit for Linear Frequency Modulated Pulse Frequency Agile Radar}
%
%

\author{Jiang Zhu, Honghui Guo, Ning Zhang, Chunyi Song and Zhiwei Xu
\thanks{J. Zhu, H. Guo, C. Song and Z. Xu are with the Ocean College, Zhejiang University, Zhoushan, 316021, China (e-mail: \{jiangzhu16, honghuiguo, cysong, xuzw\}@zju.edu.cn).}
\thanks{N. Zhang is with Nanjing Marine Radar Institute, Jiangning District, Nanjing, 211153, China (e-mail: zhangn\_ee@163.com).}}

\maketitle

\begin{abstract}
The linear frequency modulated (LFM) frequency agile radar (FAR) can synthesize a wide signal bandwidth through coherent processing while keeping the bandwidth of each pulse narrow. In this way, high range resolution profiles (HRRP) can be obtained without increasing the hardware system cost. Furthermore, the agility provides improved both robustness to jamming and spectrum efficiency. Motivated by the Newtonalized orthogonal matching pursuit (NOMP) for line spectral estimation problem, the NOMP for the FAR radar termed as NOMP-FAR is designed to process each coarse range bin to extract the HRRP and velocities of multiple targets, including the guide for determining the oversampling factor and the stopping criterion. In addition, it is shown that the target will cause false alarm in the nearby coarse range bins, a postprocessing algorithm is then proposed to suppress the ghost targets. Numerical simulations are conducted to demonstrate the effectiveness of NOMP-FAR.
\end{abstract}

\begin{IEEEkeywords}
Newtonalized orthogonal matching pursuit, frequency agile radar, oversampling factor, false alarm
\end{IEEEkeywords}

\IEEEpeerreviewmaketitle
\section{Introduction}
Linear stepped frequency radar (LSFR) in which the frequency is stepped linearly with a constant frequency change is used in many remote-sensing related applications, such as airborne synthetic aperture radar (SAR) \cite{MensaHRbook, Soumekhbook}, inverse SAR \cite{LaneISAR} and ground penetrating radar \cite{GPR1999, Davis}. For the LSFR, the transmitted signal bandwidth is increased sequentially during the coherent processing interval (CPI), yielding high range resolution \cite{SFW1984}. In detail, the carrier frequency of successive pulses is increased linearly. By coherently processing all pulses together, the LSFR indirectly synthesizes a wide bandwidth and keeps the bandwidth of each pulse low. This significantly decreases the analog-to-digital converter (ADC) sampling rate and reduces the complexity of the receiver. Meanwhile, the high-range resolution profile (HRRP) can be obtained.

However, the LSFR has its own drawbacks. The LFSR suffers from poor antijamming, especially in electronic counter-countermeasures (ECCM) \cite{GarNarTAES2002}. Since the carrier frequency variation is linear, an interceptor can easily track and predict the transmitted waveform, and then implement effective jamming. In contrast, the transmitted waveform of the frequency agile radar (FAR) where the carrier frequencies are varied in a random manner from pulse to pulse are difficult to track and predict, see Fig. \ref{SFWFAR} for an illustration, and thus FAR has excellent ECCM performance \cite{HuangTAES, RassTer}. Besides, FAR suppresses the range ambiguity and improves covert detection. In addition, FAR exploits vacant spectral bands flexibly, which increases the spectrum efficiency and can be a more effective solution in a contested, congested and competitive electromagnetic environment. Finally, FAR can simultaneously convey messages by embedding the information into the frequency hopping pattern \cite{HuangTSP2020MCAPAR, MaSPM2020}, which utilizes the spectrum more efficiently and reduces hardware cost.

The advantages of the FAR over LSFR has made FAR receive increasing attention in the radar community. In \cite{AxelssonTGRS2007}, the statistical characteristics of the ambiguity function and the sidelobe noise floor are analyzed. Later, a range velocity iterative alternating projection (RV-IAP) algorithm for target range-velocity estimation is proposed to avoid the sidelobe pedestal problem in the multiple targets scenario \cite{LiuEL2008}. The RV-IAP is based on the orthogonal matching pursuit (OMP) method \cite{OMPmethod}. In \cite{HuangTSP2018}, theoretical analysis of FAR via compressed sensing (CS) algorithms is studied, in particular the properties of the sensing matrix of the FAR. The bounds on the number of recoverable targets is derived, with the assumption that the targets are on the grid exactly. Later, theoretical analysis of FAR for extended targets has been studied \cite{WangarXiv2019}. When the targets are not on the grid, model mismatch occurs and the on-grid based algorithms may degrade significantly \cite{ChiTSP2011}.
\begin{figure}
  \centering
  \includegraphics[width=8cm]{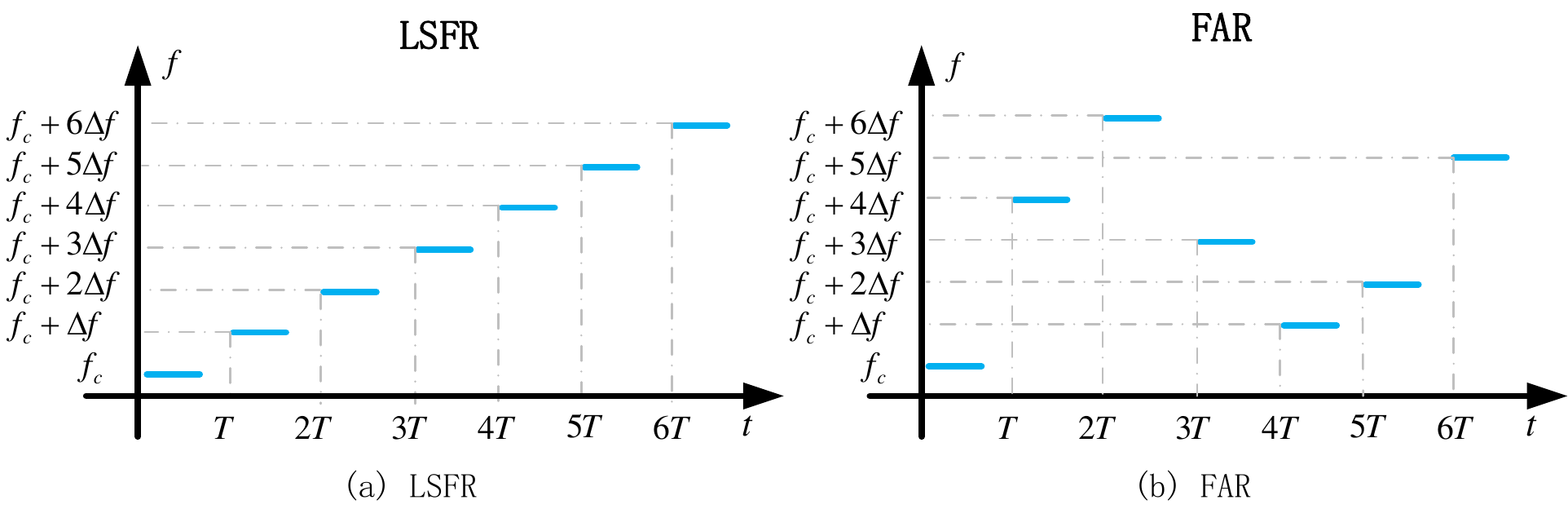}\\
  \caption{The carrier frequency versus time. (a) SFWR, (b) FAR.}\label{SFWFAR}
\end{figure}

This paper studies the linear frequency modulated (LFM) pulse FAR, in contrast with the previous monopulse radar signal model \cite{HuangTSP2018}. By referring to the Newtonalized OMP (NOMP) algorithm for line spectral estimation and the other communication and radar related applications \cite{NOMPTSP2016, MNOMPSP2019, CSSP2016, Gupta2016, Jinshi}, we design NOMP for the FAR termed as NOMP-FAR. NOMP-FAR avoids the grid mismatch problem by using a Newton refinement step and a feedback strategy to improve the estimation of already detected targets. A stopping criterion based on constant false alarm rate (CFAR) is provided and the model order is determined. Compared to \cite{HuangTSP2018} which adopts the Nyquist grid\footnote{The definition of Nyquist grid is borrowed from \cite{HuangTSP2018} and also defined later after eq. (\ref{defSpq}) in the Coarse detection step.}, it is shown that oversampling is needed to ensure the convergence of the Newston step. Besides, it is shown that targets will cause false alarms in the nearby range bins, and the relationship between the target range and velocity in the true range bin and that in the nearby range bins is revealed. This relationship is further utilized to identify and suppress the false alarms. Finally, substantial numerical experiments are conducted to both demonstrate the effectiveness of the proposed NOMP-FAR and corroborate the theoretical analysis.
%

\section{System Model}
Consider the FAR system that transmits a total of $N$ pulses in a coherent processing interval (CPI) with a constant pulse repetition interval (PRI) $T$. The transmitted LFM signal is
\begin{align}\label{LFMpulse}
s(t)={\rm rect}\left(\frac{t}{T_p}\right){\rm e}^{{\rm j}\kappa\pi (t-\frac{T_p}{2})^2},\quad 0\leq t\leq T_p,
\end{align}
where $T_p$ denotes the duration of each pulse and $\kappa$ denotes the chirp rate. For the above signal (\ref{LFMpulse}), its bandwidth is $B=\kappa T_p$. Let $B_c$ denote the total transmit bandwidth available at the baseband. For the $n$th pulse, the carrier frequency is $f_n$ satisfying $f_n\in [f_c,f_c+B_c]$, $n=0,\cdots,N-1$, where $f_c$ denotes the minimal carrier frequency. The frequencies $\{f_n\}_{n=0}^{N-1}$ are drawn uniformly at random from the set ${\mathcal F}=\{f_n|f_n=f_c+d_n\Delta f,~d_n\in \{0,1,2,\cdots,M-1\}\subset {\mathbb Z}\}$, where $M=\lfloor B_c/\Delta f\rfloor+1$ and $d_n$ refers to the $n$th random frequency-modulation code.

To develop the system model, let $nT$ denote the time origin. For simplicity, a single target echo model is established. Then the multiple target model is generalized. The $n$th transmitted pulse is
\begin{align}
x_n(t)=s(t){\rm e}^{{\rm j}2\pi f_nt}.
\end{align}
Besides, at the $t$th instant, the target position $r(t)$ is
\begin{align}
r(t)=r(0)+\nu\times (nT+t),
\end{align}
where $r(0)$ denotes the original position in the $1$st pulse and $\nu$ denotes the velocity. The $n$th received echo signal $r_n(t)$ is
\begin{align}
   r_n(t)&=\beta x\left( t-\frac{2r(t)}{c} \right)+w_n(t)\notag\\
   &\stackrel{a}\approx \beta s\left( t-\frac{2(r(0)+\nu nT)}{c} \right){{\text{e}}^{\text{j}2\pi {{f}_{n}}\left( t-\frac{2(r(0)+\nu (t+nT))}{c} \right)}}+w_n(t) \notag\\
 &=\beta s\left( t-{{\tau }_{0}}-\frac{2\nu nT}{c} \right){{\text{e}}^{\text{j}2\pi {{f}_{n}}(t-{{\tau }_{0}})}}{{\text{e}}^{\text{j}2\pi {{f}_{dn}}(t+nT)}}+w_n(t),\notag
\end{align}
where $w_n(t)$ is the white Gaussian noise and $w_n(kT_s)\sim {\mathcal {CN}}(0,\sigma^2)$ with $T_s$ being the sampling interval, $\stackrel{a}\approx$ is due to the ``stop-and-hop'' assumptions \cite{FRSP2005},
\begin{align}
\tau_0=\frac{2r(0)}{c}
\end{align}
denotes the echo delay in the first pulse,
\begin{align}
{f}_{dn}\triangleq \frac{2\nu}{c}f_n
\end{align}
denotes the Doppler shift. The reference signal for the $n$th pulse is
\begin{align}\label{refsig}
x_{{\rm ref},n}(t)=\frac{1}{\sqrt{N_{\rm ref}}}x_n^*(-t)=\frac{1}{\sqrt{N_{\rm ref}}}s^*(-t){\rm e}^{-{\rm j}2\pi f_n(-t)},
\end{align}
where $N_{\rm ref}$ denotes the number of sampling points and ${1}/{\sqrt{N_{\rm ref}}}$ is to ensure that $\sum\limits_{k=0}^{N_{\rm ref}-1}|x_{{\rm ref},n}(kT_s)|^2=1$. Pulse compression (PC) is conducted on the echo signal to obtain
\begin{align}
  y_n(t)&=\int_{-\infty }^{\infty }{r}_n(\tau ){x_{{\rm ref},n}}(t-\tau )\text{d}\tau \notag\\
 &=\frac{\beta}{\sqrt{N_{\rm ref}}}{{\text{e}}^{\text{j}2\pi {{f}_{n}}(t-{{\tau }_{0}})}}\int_{-\infty }^{\infty }{{{\text{e}}^{\text{j}2\pi {{f}_{dn}}(\tau +nT)}}}s(\tau -{{\tau }_{0}}-\frac{2\nu nT}{c}){{s}^{*}}(\tau -t)\text{d}\tau+\tilde{w}_n(t),
\end{align}
where $\tilde{w}_n(t)\triangleq \frac{1}{\sqrt{N_{\rm ref}}}\int_{-\infty }^{\infty }w_n(\tau){{s}^{*}}(\tau -t){{\text{e}}^{-\text{j}2\pi {{f}_{n}}(\tau -t)}}\text{d}\tau$. Note that the variance of $\tilde{w}_n(kT_s) $ is still $\sigma^2$ due to the normalizing property $\sum\limits_{k=0}^{N_{\rm ref}-1}|x_{{\rm ref},n}(kT_s)|^2=1$. Let $\tau-t$ be replaced with $u$ and the integral domain be revised accordingly, then $y_n(t)$ is rewritten as
\begin{align}
  y_n(t)=\frac{\beta}{\sqrt{N_{\rm ref}}} {{\text{e}}^{\text{j}2\pi {{f}_{n}}(t-{{\tau }_{0}})}}{{\text{e}}^{\text{j}2\pi {{f}_{dn}}(t+nT)}} \chi (-t+{{\tau }_{0}}+\frac{2\nu nT}{c},{{f}_{dn}})+\tilde{w}_n(t),\notag
\end{align}
where ${\rm \chi}(\xi,f_d)$ denotes the ambiguity function defined as
\begin{align}\label{amfdef}
{\rm \chi}(\xi,f_d)=\int_{-\infty}^{\infty}{\rm e}^{{\rm j}2\pi f_dt}s(t-\xi)s^*(t){\rm d}t.
\end{align}
For the LFM signal, its ambiguity function is
\begin{equation}
	\label{amfun}
	{\rm \chi}(\xi,f_d)=\begin{cases}{\rm e}^{{\rm j}\pi f_d(T_p+\xi)}\left(T_p-|\xi|\right){\rm sinc}[(f_d-\kappa \xi)(T_p-|\xi|)],\quad |\xi|<T_p,\\
0,\quad {\rm otherwise},
\end{cases}
	\end{equation}
where ${\rm sinc}(x) = \frac{\sin(\pi x)}{\pi x}$. Note that $|{\rm \chi}(\xi,f_d)|\leq {\rm \chi}(0,0)=T_p$.

The $n$th echo signal is sampled at the rate $F_s=1/T_s$. Assuming the target lies in the $l$th coarse range cell corresponding to the sampling instant $t_l=(l-1)T_s$, $l=\lfloor\frac{R_{\rm min}}{c/(2F_s)}+\frac{1}{2}\rfloor+1,\cdots,\lfloor\frac{R_{\rm max}}{c/(2F_s)}+\frac{1}{2}\rfloor+1$, where $R_{\rm min}$ and $R_{\rm max}$ denotes the range interval of interest. Let
\begin{align}\label{HRRPdef}
R=(\tau_0-t_l)c/2
\end{align}
denote the high resolution range profile. The sampled, discrete received signal for the $n$th pulse is
\begin{align}
  & y_n(t_l)=\frac{\beta}{\sqrt{N_{\rm ref}}} {{\text{e}}^{\text{j}2\pi {{f}_{n}}({t_l}-{{\tau }_{0}})}}{{\text{e}}^{\text{j}2\pi {{f}_{dn}}({t_l}+nT)}} \chi \left(-t_l+{{\tau }_{0}}+\frac{2\nu nT}{c},{{f}_{dn}}\right)+\tilde{w}_n(t_l) \notag\\
 &\stackrel{b}\approx \left( \frac{\beta}{\sqrt{N_{\rm ref}}} {{\text{e}}^{\text{j}2\pi {{f}_{c}}(-\frac{2\nu }{c}t_l)}}{{\text{e}}^{-\text{j}\frac{4\pi {{f}_{c}}R}{c}}} \right){{\text{e}}^{-\text{j}\frac{4\pi \Delta fR}{c}{{d}_{n}}}}{{\text{e}}^{-\text{j}\frac{4\pi T\nu {{f}_{c}}}{c}(1+{{d}_{n}}\Delta f/{{f}_{c}})n}}\chi \left(\frac{2R}{c}+\frac{2\nu nT}{c},{{f}_{dn}}\right)+\tilde{w}_n(t_l),\label{form9}
\end{align}
where $\stackrel{b}\approx$ is due to the approximation ${{\text{e}}^{\text{j}2\pi {{f}_{n}}(-\frac{2\nu }{c}t_l)}}\approx {{\text{e}}^{\text{j}2\pi {{f}_{c}}(-\frac{2\nu }{c}t_l)}}$. Define the maximum unambiguous high-resolution range and unambiguous velocity as
\begin{align}\label{rangevel}
{{R}_{u}}=\frac{c}{2\Delta f},\quad {{\nu }_{u}}=\frac{c}{2{{f}_{c}}T}.
\end{align}
Substituting (\ref{rangevel}) in (\ref{form9}), the sampled, discrete received signal $y(n,t_l)$ can be further formulated as
\begin{align}
   y_n(t_l)&\approx \left( \frac{\beta}{\sqrt{N_{\rm ref}}} {{\text{e}}^{\text{j}2\pi {{f}_{c}}(-\frac{2\nu }{c}t_l)}}{{\text{e}}^{-\text{j}\frac{4\pi {{f}_{c}}R}{c}}} \right){{\text{e}}^{-\text{j}\frac{4\pi \Delta fR}{c}{{d}_{n}}}}{{\text{e}}^{-\text{j}\frac{4\pi T\nu {{f}_{c}}}{c}(1+{{d}_{n}}\Delta f/{{f}_{c}})n}}\chi \left(\frac{2R}{c}+\frac{2\nu nT}{c},{{f}_{dn}}\right)+\tilde{w}_n(t_l) \notag\\
 &= \underbrace{\tilde{\gamma} {{\text{e}}^{\text{j}p{{d}_{n}}+\text{j}q(1+{{d}_{n}}\Delta f/{{f}_{c}})n}}\chi \left( -\frac{p}{2\pi \Delta f}-\frac{qn}{2\pi {{f}_{c}}},\frac{q}{2\pi T}(1+{{d}_{n}}\Delta f/{{f}_{c}}) \right)}_{z_n(t_l)}+\tilde{w}_n(t_l),\label{form12}
\end{align}
where
\begin{align}
  & \tilde{\gamma} =\frac{\beta}{\sqrt{N_{\rm ref}}} {{\text{e}}^{\text{j}2\pi {{f}_{c}}(-\frac{2\nu }{c}t_l)}}{{\text{e}}^{-\text{j}\frac{4\pi {{f}_{c}}R}{c}}}, \\
 & p=-2\pi \frac{R}{{{R}_{u}}}=-\frac{4\pi \Delta f}{c}R, \\
 & q=-2\pi \frac{\nu }{{{\nu }_{u}}}=-\frac{4\pi {{f}_{c}}T}{c}\nu .
\end{align}

It is worth noting that (\ref{form12}) is very general for arbitrary waveforms by replacing the corresponding ambiguity function $\chi(\cdot,\cdot)$.
Taking the noise and multiple targets into account, (\ref{form12}) can be formulated as the following mathematical problem
\begin{align}\label{modelam}
{{y}_{n}}=\sum\limits_{k=1}^{K}{{\tilde{\gamma}_{k}}}{{\text{e}}^{\text{j}{{p}_{k}}{{d}_{n}}+\text{j}{{q}_{k}}(1+{{d}_{n}}\Delta f/{{f}_{c}})n}} \chi \left( -\frac{{{p}_{k}}}{2\pi \Delta f}-\frac{{{q}_{k}}n}{2\pi {{f}_{c}}},\frac{{{q}_{k}}}{2\pi T}(1+{{d}_{n}}\Delta f/{{f}_{c}}) \right)+{{w}_{n}},
\end{align}
where $n=0,\cdots ,N-1$, $w_n$ is independent and identically distributed and $w_n\sim {\mathcal {CN}}(0,\sigma^2)$. Without loss of generality, the noise variance is set to be $\sigma^2=1$. From the signal processing perspective, the goal is to recover the digital frequencies $p_k\in [-\pi,\pi)$ and $q_k\in [-\pi,\pi)$ from the noisy measurements ${\mathbf y}=[y_0,y_1,\cdots,y_{N-1}]^{\rm T}$. Once the digital frequency is recovered, the range and velocity estimates are
\begin{align}\label{ptor}
\hat{r}=-\frac{c}{4\pi\Delta f}p+\frac{ct_l}{2},
\end{align}
and
\begin{align}\label{qtov}
\hat{v}=-\frac{c}{4\pi f_c T}q.
\end{align}

Note that $\chi \left( -\frac{{{p}_{k}}}{2\pi \Delta f}-\frac{{{q}_{k}}n}{2\pi {{f}_{c}}},\frac{{{q}_{k}}}{2\pi T}(1+{{d}_{n}}\Delta f/{{f}_{c}}) \right)\approx const$ for typical system parameters, $\forall n$, which can be absorbed into $\tilde{\gamma}_k$. Define ${\gamma}_k\triangleq\sqrt{N}\tilde{\gamma}_k\chi \left( -\frac{{{p}_{k}}}{2\pi \Delta f}-\frac{{{q}_{k}}n}{2\pi {{f}_{c}}},\frac{{{q}_{k}}}{2\pi T}(1+{{d}_{n}}\Delta f/{{f}_{c}}) \right)$. Consequently, model (\ref{modelam}) can be further approximated and simplified as
\begin{align}\label{modelfinal}
{{y}_{n}}=\sum\limits_{k=1}^{K}{{{{{\gamma }}}_{k}}}\frac{1}{\sqrt{N}}{{\text{e}}^{\text{j}{{p}_{k}}{{d}_{n}}+\text{j}{{q}_{k}}(1+{{d}_{n}}\Delta f/{{f}_{c}})n}}+{{w}_{n}},
\end{align}
where $n=0,\cdots ,N-1$. This is consistent with \cite{HuangTSP2018} studying the FAR system using monotone pulses. In the following section, the NOMP-FAR is developed for solving problem (\ref{modelfinal}).

%

\section{NOMP-FAR Algorithm}
At first, the NOMP-FAR algorithm is developed for a single target scenario. Then it is generalized to multiple target scenarios \cite{NOMPTSP2016}. We present the details in the following.

The model (\ref{modelfinal}) is formulated as a vector form
\begin{align}\label{modelfinalvector}
{\mathbf y}=\sum\limits_{k=1}^K\gamma_k{\mathbf a}(p_k,q_k)+{\mathbf w},
\end{align}
where
${\mathbf w}\sim {\mathcal {CN}}(0,\sigma_w^2{\mathbf I})$, ${\mathbf a}(p,q)$ is
\begin{align}
\mathbf{a}(p,q)=\frac{1}{\sqrt{N}}\left[ \begin{matrix}
   {{\text{e}}^{\text{j}p{{d}_{0}}}}  \\
   \vdots   \\
   {{\text{e}}^{\text{j}p{{d}_{n}}}}{{\text{e}}^{\text{j}q(1+{{d}_{n}}\Delta f/{{f}_{c}})n}}  \\
   \vdots   \\
   {{\text{e}}^{\text{j}p{{d}_{N-1}}}}{{\text{e}}^{\text{j}q(1+{{d}_{N-1}}\Delta f/{{f}_{c}})(N-1)}}  \\
\end{matrix} \right]
\end{align}
and $\|\mathbf{a}(p,q)\|_2=1$.

\subsection{A Single Target Scenario}
Suppose that there is only a single target, i.e.,
\begin{align}
{\mathbf y}=\gamma{\mathbf a}(p,q)+{\mathbf w}.
\end{align}
The maximum likelihood (ML) estimation can be performed to yield the least squares problem
\begin{align}\label{cost1}
\underset{p,q,\gamma}{\operatorname{minimize}}~\|{\mathbf y}-\gamma{\mathbf a}(p,q)\|^2.
\end{align}
Given $p$ and $q$, the optimal $\gamma$ is
\begin{align}
\gamma_{\rm opt}=\frac{{\mathbf a}^{\rm H}(p,q){\mathbf y}}{\|{\mathbf a}(p,q)\|_2^2}={{\mathbf a}^{\rm H}(p,q){\mathbf y}}.
\end{align}
Eliminating $\gamma$, problem (\ref{cost1}) can be further simplified as
\begin{align}\label{defSpq}
\underset{p,q}{\operatorname{maximize}}~\|{\mathbf y}^{\rm H}{\mathbf a}(p,q)\|^2\triangleq S_w(p,q).
\end{align}

To solve the problem, a two stage procedure named coarse detection and refinement is adopted \cite{NOMPTSP2016}.
\begin{itemize}
  \item Coarse detection: By restricting $p$ and $q$ to the finite discrete set denoted by $\Omega_p \triangleq \{k_p(2\pi/{\gamma_p}M)-\pi:k_p=0, 1, \cdots, ({\gamma}_pM-1)\}$ and $\Omega_q \triangleq \{k_q(2\pi/{\gamma_q}N)-\pi:k_q=0, 1, \cdots, ({\gamma}_qN-1)\}$, where $\gamma_p$ and $\gamma_q$  are the oversampling factors relative to the Nyquist grid \cite{HuangTSP2018}, we can obtain a coarse estimate of $(p,q)$. We treat the $p_c\in{\Omega_p}$ and $q_c\in{\Omega_q}$ that maximize the cost function (\ref{defSpq}) as the output of this stage, and the corresponding $\gamma$ estimate is ${\gamma}_c={\mathbf a}^{\rm H}(p_c,q_c){\mathbf y}/\Vert {{\mathbf a}(p_c,q_c)} \Vert^2_{\rm{2}}$.
  \item Refinement: Let $({\gamma}_c, p_c,q_c)$ denote the current estimate, then the Newton procedure for frequency refinement is
   \begin{equation}
	\label{rule}
  \left[
  \begin{array}{c}
    \hat{p} \\
    \hat{q}
  \end{array}\right]=\left[
  \begin{array}{c}
    p_c \\
    q_c
  \end{array}\right]-\left[
    \begin{array}{cc}
      \frac{\partial^2 S_w(p,q)}{\partial p^2} & \frac{\partial^2 S_w(p,q)}{\partial p\partial q} \\
      \frac{\partial^2 S_w(p,q)}{\partial q\partial p} & \frac{\partial^2 S_w(p,q)}{\partial q^2} \\
    \end{array}
  \right]^{-1}\left[\begin{array}{c}
    \frac{\partial S_w(p,q)}{\partial p} \\
    \frac{\partial S_w(p,q)}{\partial q}
  \end{array}\right] \Bigg|_{(p,q)=(p_c,q_c)}.
	\end{equation}
   Since $S_w(p,q)=\|{\mathbf y}^{\rm H}{\mathbf a}(p,q)\|^2={\mathbf y}^{\rm H}{\mathbf a}(p,q){\mathbf a}(p,q)^{\rm H}{\mathbf y}$, the elements of the gradient and Hessian matrix are given in
      \begin{subequations}
\begin{align}
\frac{\partial S_w(p,q)}{\partial p}&=2\Re\left\{{\mathbf y}^{\rm H}\frac{\partial {\mathbf a}(p,q)}{\partial p}{\mathbf a}(p,q)^{\rm H}{\mathbf y}\right\},\\
\frac{\partial S_w(p,q)}{\partial q}&=2\Re\left\{{\mathbf y}^{\rm H}\frac{\partial {\mathbf a}(p,q)}{\partial q}{\mathbf a}(p,q)^{\rm H}{\mathbf y}\right\},\\
\frac{\partial^2 S_w(p,q)}{\partial p^2}&=2\Re\left\{{\mathbf y}^{\rm H}\frac{\partial^2 {\mathbf a}(p,q)}{\partial p^2}{\mathbf a}(p,q)^{\rm H}{\mathbf y}\right\}+2\Re\left\{{\mathbf y}^{\rm H}\frac{\partial {\mathbf a}(p,q)}{\partial p}\frac{\partial {\mathbf a}^{\rm H}(p,q)}{\partial p}{\mathbf y}\right\},\\
\frac{\partial^2 S_w(p,q)}{\partial p\partial q}&=2\Re\left\{{\mathbf y}^{\rm H}\frac{\partial^2 {\mathbf a}(p,q)}{\partial p\partial q}{\mathbf a}(p,q)^{\rm H}{\mathbf y}\right\}+2\Re\left\{{\mathbf y}^{\rm H}\frac{\partial {\mathbf a}(p,q)}{\partial p}\frac{\partial {\mathbf a}^{\rm H}(p,q)}{\partial q}{\mathbf y}\right\},\\
\frac{\partial^2 S_w(p,q)}{\partial q^2}&=2\Re\left\{{\mathbf y}^{\rm H}\frac{\partial^2 {\mathbf a}(p,q)}{\partial q^2}{\mathbf a}(p,q)^{\rm H}{\mathbf y}\right\}+2\Re\left\{{\mathbf y}^{\rm H}\frac{\partial {\mathbf a}(p,q)}{\partial q}\frac{\partial {\mathbf a}^{\rm H}(p,q)}{\partial q}{\mathbf y}\right\}.
\end{align}
\end{subequations}
We maximize $S_w(p,q)$ by employing the update rule (\ref{rule}) on the condition that the function is locally concave. Let $R_s$ denote the number of Newton steps.
\end{itemize}

In Section \ref{OFD}, the oversampling factor is investigated.

%
%
%
\subsection{Multiple Targets Scenario}
Assume that we have already detected $L$ targets, and let ${\mathcal P}=\{(\gamma_l,p_l,q_l),l=1,\cdots,L\}$ denote the set of estimates of the detected targets. The residual measurement corresponding to this estimate is
\begin{align}
{\mathbf y}_r(P)={\mathbf y}-{\sum\limits_{l=1}^L \gamma_l{\mathbf a}(p_l,q_l)}.
\end{align}
The method of estimating multiple targets proceeds by employing the single target procedure to perform Newtonized coordinate descent on the residual energy $\Vert{{\mathbf y}_r(P)}\Vert_2^{2}$. One step of this coordinate descent involves adjusting all $(p_l,q_l)$. The procedure to refine the $l$th target is as follows: ${\mathbf y}_r(P\backslash \{{\gamma}_l, p_l, q_l\})$ now is referred to as the measurement ${\mathbf y}$ and the single target update step is utilised to refine $({\gamma}_l, p_l, q_l)$.

Refinement Acceptance Condition (RAC): This refinement step is accepted when it results in a strict improvement in $G_{{\mathbf y}_r(P\backslash \{\gamma_l, p_l, q_l\})}(p,q)$, namely, $G_{{\mathbf y}_r}(\hat{p},\hat{q})>G_{{\mathbf y}_r}(p_l,q_l)$. By doing this, we can make sure that the adopted refinement must decrease the overall residual energy.

In summary, firstly, we detect a frequency pair $(p,q)$ over the discrete set $\Omega_p$ and $\Omega_q$ by maximizing the cost function (\ref{defSpq}). Then we use the knowledge of the first-order and second-order derivative of the cost function to refine the estimate of $(p,q)$. Next, we use the information of all the other previously detected targets to further improve the estimation performance of every previously detected target one at a time. This step improves the accuracy of the algorithm and the rounds of Cyclic Refinement is $R_c$. Finally, we update $\boldsymbol \gamma$ by least squares methods. The NOMP-FAR is summarized as Algorithm \ref{NOMP-FAR}.

\begin{algorithm}[ht]
\caption{NOMP-FAR.}\label{NOMP-FAR}
1: \textbf{Procedure} EXTRACTTARGETS $({\mathbf y, \tau}):$\\
2: $m\leftarrow 0$, ${P}_0 = \{\}$\\
3: $\textbf{while}$ {$\underset{p \in {\Omega}_p, q \in {\Omega}_q}{\rm {max}}G_{{\mathbf y}_r({P}_m)}(p,q)>\tau$}\\
4: $m\leftarrow m+1$\\
5: $\textbf{IDENTIFY}$\\
$(p_c,q_c) = \underset{p \in {\Omega}_p, q \in {\Omega}_q}{\rm {argmax}} G_{{\mathbf y}_r({P}_{m-1})}(p,q)$, ${\gamma}_c = {\mathbf a}^{\rm H}(p_c, q_c){\mathbf y}_r({P}_{m-1})/\Vert {{\mathbf a}(p_c,q_c)} \Vert^2_{\rm{2}}$\\
\\6: ${P_m'}\leftarrow {P}_{m-1}\cup \{(p_c, q_c, \gamma_c)\}$
\\7: SINGLE REFINEMENT: Refine $(p_c, q_c)$ using single frequency Newton update algorithm (${R}_s$ Newton steps) to obtain improved estimates $(\hat{p}, \hat{q})$, and its corresponding $\gamma$ estimate is $\hat {\gamma} \leftarrow {\mathbf a}^{\rm H}(\hat p, \hat q){\mathbf y}_r({P}_{m-1})/\Vert {{\mathbf a}(\hat p, \hat q)} \Vert^2_{\rm{2}}$.
\\8: ${P_m''}\leftarrow {P}_{m-1}\cup \{(\hat {p}, {\hat q}, \hat{\gamma})\}$
\\9: CYCLIC REFINEMENT: Refine parameters in ${P}''_m$ one at a time: For each $(\hat{p},\hat{q},\hat{\gamma})\in{P_m''}$, we treat ${\mathbf y}_r({P_m''} \backslash \{(\gamma,p,q)\})$ as the measurement $\mathbf y$, and apply single frequency Newton update algorithm. We perform ${R}_c$ rounds of cyclic refinements. Let ${P_m'''}$ denote the new set of parameters.
\\10: UPDATE all ${\boldsymbol \gamma}$ estimate in ${P_m'''}$ by least squares: $\mathbf A \triangleq [{\mathbf a}({p_1},{q_1}),\cdots,{\mathbf a}({p_m},{q_m})]$, where $\{(p_i,q_i)\}_{i=1}^m$ are the frequencies in ${P_m'''}$. And ${\boldsymbol\gamma} = {\mathbf A}^\dagger {\mathbf y}$. \\Let ${P}_m$ denote the new set of parameters.\\
11: $\textbf{end while}$
\\12: \textbf{return} $P_m$
\label{code:recentEnd}
\end{algorithm}

\subsection{Complexity Analysis}
The computation complexity of the main steps of NOMP-FAR assuming that the algorithm runs for exactly $K$ iterations. Both checking whether the stopping criterion is satisfied and the detection step are implemented with computation complexity $O(\gamma_M\gamma_NMN^2)$. The Single Refinement Step takes $O(R_sN)$ operations per target, hence the total cost for Single Refinement Step is $O(R_sKN)$ operations. The Cyclic Refinement Step has complexity $O(R_cR_sK^2N)$. For the Update Step which involves computing the least squares solution, the computation complexity is $O(NK^2+K^3)$ per iteration if the pseudo-inverse is computed directly. The overall cost for the Update Step is  $O(NK^3+K^4)$. To summarize, the whole computation complexity of the NOMP-FAR is $O(\gamma_M\gamma_NMN^2+NK^3)$ because $N>M$ and $N>K$.

\subsection{Stopping Criterion}
Suppose that $K'$ targets have been detected and the residual is
\begin{align}
{\mathbf y}_r({P'})={\mathbf y} - {\sum_{l=1}^{K'}}\hat{\gamma}_{l}{\mathbf a}(\hat{p}_{l},\hat{q}_{l})
\end{align}
where $P' = \{(\hat{\gamma}_l, \hat{p}_l, \hat{q}_l), l=1,\cdots,K'\}$ denote the set of estimates of the detected targets. Then the problem is to decide whether the residual ${\mathbf y}_r({P'})$ contains the targets or not. For simplicity, a binary hypothesis testing problem where the null hypothesis ${\mathcal H}_0$ corresponds to nearly the additive white Gaussian noise (AWGN) or the alternative hypothesis ${\mathcal H}_1$ corresponds to a single target is formulated, i.e.,
\begin{align}\label{BHT}
\begin{cases}
{\mathcal H}_0:{\mathbf y}_r({P'})={\mathbf w},\\
{\mathcal H}_1:{\mathbf y}_r({P'})=\gamma{\mathbf a}({p},{q})+{\mathbf w}.
\end{cases}
\end{align}
Since this hypothesis testing problem involves unknown nuisance parameters, the GLRT can be derived. Following the procedure in \cite{Kaydetbook}, the GLRT is
\begin{align}\label{problemhard}
G_{{\mathbf y}_r(P')} = \underset{(p,q)}{\rm {max}}  \left| {\mathbf a}^{\rm H}({p},{q}){\mathbf y}_r({P'})\right|^2.
\end{align}

The stopping criterion is used to estimate the model order $K$. If the residual energy can be well explained by noise, up to a target overestimating probability, then we stop. Instead of using the exact GLRT $G_{{\mathbf y}_r(P')}$ (\ref{problemhard}), we choose to terminate the algorithm by comparing the magnitude of the whole grids of the residual with the expected noise power. The algorithm stops when
\begin{align}\label{CFAR}
G_{{\mathbf y}_r(P')}(\omega) =   \left| {\mathbf a}^{\rm H}({p},q){\mathbf y}_r({P'})\right|^2 < \tau
\end{align}
for all the sampling frequencies $p\in \Omega_p$ and $q\in \Omega_q$, where $\tau$ is the stopping threshold determined in the following text.

Supposedly, we have already correctly detected all targets in the mixture. Under this condition, the residual is ${\mathbf y}_{r}(P') \approx {\mathbf w}$, where ${\mathbf w} \sim \mathcal{CN}(\mathbf 0, \sigma^2{\mathbf I_{N}})$. Obviously, the event (\ref{CFAR}) is equivalent to
\begin{align}
\underset{(m,n)}{\operatorname{max}}|{\mathbf a}^{\rm H}(p_m,q_n){\mathbf w}|^2<\tau
\end{align}
and the overestimating probability is
\begin{align}
{\rm P}\left(\underset{(m,n)}{\operatorname{max}}|{\mathbf a}^{\rm H}(p_m,q_n){\mathbf w}|^2>\tau\right)=P_{\rm FA}.
\end{align}
Note that ${\mathbf A}^{\rm H}{\mathbf w}\sim {\mathcal {CN}}({\mathbf 0},\sigma^2{\mathbf A}^{\rm H}{\mathbf A})$, the noise ${\mathbf A}^{\rm H}{\mathbf w}$ is correlated, the threshold $\tau$ can not be obtained analytically. Therefore the threshold $\tau$ is obtained through the Monte Carlo (MC) simulations. It can be checked that $\tau=\sigma^2g({\mathbf A},M,N)$, where $g({\mathbf A},M,N)$ is the threshold with unit noise variance.

\subsection{SNR Analysis}\label{SNRana}
Let us analyze the target detection problem through SNR. Taking a single target as an example and for the $n$th received signal $r_n(t)$, define the received SNR ${\rm SNR}_{\rm r}$ as
\begin{align}
{\rm SNR}_{\rm r}=\frac{\sum\limits_{k=0}^{N_{\rm ref}-1}|r_n(kT_s)|^2}{N_{\rm ref}\sigma^2}.
\end{align}
After pulse compression, the signal is obtained as $y_n(t)$. For the sampled signal whose range cell corresponding to the true target and the received signal is perfectly matched, i.e., $\tau_l=\tau_0$, the signal is coherently integrated and the ${\rm SNR}_{\rm pc}$ is improved by $10\log N_{\rm ref}$ dB, i.e.,
\begin{align}
{\rm SNR}_{\rm pc}=10\log\frac{\sum\limits_{n=1}^N |z_n(\tau_{l})|^2}{N\sigma^2}={\rm SNR}_{\rm r}+10\log N_{\rm ref},
\end{align}
where $z_n(\tau_{l})$ is defined in (\ref{form12}). After coherent integration as shown in (\ref{defSpq}), the final ${\rm SNR}_{\rm ci}$ is improved by $10\log N$ dB, i.e.,
\begin{align}\label{CIres}
{\rm SNR}_{\rm ci}&={\rm SNR}_{\rm pc}+10\log N={\rm SNR}_{\rm r}+10\log N_{\rm ref}+10\log N.
\end{align}
For example, given that the number of pulses is $N=64$, the coherent integration gain is $10\log N\approx 18$ dB. Given the threshold $\tau$ determined by the false alarm probability, the target can be detected when the following condition
\begin{align}\label{CIcond}
{\rm SNR}_{\rm ci}&={\rm SNR}_{\rm pc}+10\log N\notag\\
&={\rm SNR}_{\rm r}+10\log N_{\rm ref}+10\log N\geq 10\log \tau
\end{align}
is satisfied, which will be validated in numerical simulations.
\section{Oversampling Factor Determination}\label{OFD}
As shown in the previous subsection, the coarse detection stage involves the determination of the oversampling factor. Obviously, increasing the  oversampling factor yields the more accurate coarse estimation and makes the Newton step effective at first. As the oversampling factor continues to increase and exceeds a certain threshold, the estimation accuracy saturates as the Newton step can always converge to the optimal solution quadratically \cite{CVXbook}.

On the other hand, the storage and computation complexity increase significantly. Thus, designing an appropriate or minimal oversampling factor to minimize the computation complexity and guaranteeing the estimation accuracy are very important for the NOMP-FAR algorithm. A necessary condition is to ensure that the Hessian matrix is positive semidefinite at the coarse estimate.

Since $S(p,q)$ depends on $\mathbf y$, which suffers from the random noise, directly analyzing $S(p,q)$ is difficult. In the following, we focus on the noiseless setting, i.e., ${\mathbf y}=\gamma_0{\mathbf a}(p_0,q_0)$, where $p_0$ and $q_0$ denote the true digital frequency. Consequently, the newly defined function $S_0(p,q,p_0,q_0)$ is obtained as
\begin{align}
S_0(p,q,p_0,q_0)\triangleq \|{\mathbf a}^{\rm H}(p_0,q_0){\mathbf a}(p,q)\|^2.
\end{align}
It is easy to see that $S_0(p,q,p_0,q_0)$ statisfy
\begin{align}
S_0(p,q,p_0,q_0)=S_0(p-p_0,q-q_0,0,0).
\end{align}
To simplify the analysis, let $p_0=q_0=0$ and define $S_0(p,q)\triangleq S_0(p,q,0,0)$, i.e.,
\begin{align}\label{S0pqdef}
S_0(p,q)&\triangleq S_0(p,q,0,0)=\|{\mathbf 1}^{\rm T}{\mathbf a}(p,q)\|^2\notag\\
&=\frac{1}{N}\sum\limits_{n=0}^{N-1}\sum\limits_{m=0}^{N-1}{\rm e}^{{\rm j}p(d_n-d_m)}{\rm e}^{{\rm j}q((1+d_n\Delta f/f_c)n-(1+d_m\Delta f/f_c)m)}.
\end{align}
Then we delve into the property of $S_0(p,q)$ to provide the guidelines of designing the algorithms.

Note that $S_0(p,q)$ (\ref{S0pqdef}) depends on $\{d_n\}_{n=0}^{N-1}$, which is often drawn randomly from a distribution. As a result, we further eliminates the effects of $\{d_n\}_{n=0}^{N-1}$ by averaging over them, and provide the guidelines of determining the oversampling factor.


At first, the average of $S_0(p,q)$ with respect to $d_n$ is obtained as
\begin{align}
	\label{bars0pq}
    \bar{S_0}(p,q) = {\rm E}[S_0(p,q)]=1+\frac{1}{N}\sum\limits_{n=0}^{N-1}\sum\limits_{m=0 m\neq n}^{N-1} {\rm e}^{{\rm j}q(n-m)}C_2[(p+qn\Delta f/f_c),-(p+qm\Delta f/f_c)],
\end{align}
where
\begin{align}
C_2(p,q) = \int_{-\infty}^{\infty}\int_{-\infty}^{\infty}p(x,y){\rm e}^{{\rm j}px+{\rm j}qy}{\rm d}x{\rm d}y
\end{align}
denotes the 2-D characteristic function of the 2-D PDF $p(x,y)$ of $d_n$ and $d_m$. When the frequency samples are statistically independent,
$\bar{S_0}(p,q)$ can be further simplified as
\begin{align}\label{S0pq}
    \bar{S_0}(p,q) = 1 + \frac{1}{N}\sum\limits_{n=0}^{N-1}\sum\limits_{m=0 m\neq n}^{N-1}  {\rm e}^{{\rm j}q(n-m)} C_1(p+qn\Delta f/f_c)  C_1(-(p+qm\Delta f/f_c)),
\end{align}
where $C_1(p) = \int_{-\infty}^{\infty} p(x){\rm e}^{{\rm j}px}{\rm d}x$.

Given that $d_n$ is uniformly drawn from the set $0,1,\cdots,M-1$, i.e., $d_n$ follows $p(x)=\frac{1}{M} \sum\limits_{l=0}^{M-1}\delta(x-l)$ and $C_1(p)$ is
\begin{align}\label{C1p}
C_1(p) = \int_{-\infty}^{\infty} p(x){\rm e}^{{\rm j}px}dx= \frac{1}{M}\frac{1-{\rm e}^{{\rm j}pM}}{1-{\rm e}^{{\rm j}p}}.
\end{align}
Substituting (\ref{C1p}) in (\ref{S0pq}), $\bar{S_0}(p,q)$ is
\begin{align}\label{barS0pq}
\bar{S_0}(p,q)= 1 +\frac{1}{N}\sum\sum\limits_{m\neq n}{\rm e}^{{\rm j}q(n-m)\left[1+\frac{(M-1)\Delta f}{2f_c}\right]} h_M\left(p+\frac{n\Delta f}{f_c}q\right) h_M\left(p+\frac{m\Delta f}{f_c}q\right),
\end{align}
where ${h}_M(x) = \frac{\sin(Mx/2)}{M\sin(x/2)}$.

In the following, the three cases corresponding to $q=0$, $p=0$, and $p\neq0$, $q\neq0$ are investigated respectively.
\subsection{$q=0$ case}\label{qzero}
For $q=0$, (\ref{barS0pq}) is simplified as
\begin{align}
\bar{S_0}(p,0)= 1 + \left(N-1\right)h_M^2\left(p\right).
\end{align}
where $p\in [-\pi,\pi)$. Define the Nyquist frequency $\Delta_{{\rm Nyq},p}$ and the normalized digital frequency $x_p$ as
\begin{align}
\Delta_{{\rm Nyq},p} &= 2\pi/M\notag\\
x_p &= p/\Delta_{{\rm Nyq},p},
\end{align}
where $x_p\in [-M/2,M/2)$.

Define
\begin{align}\label{gxdef}
g(x_p)=\frac{1}{2}h_M^2\left(2\pi x_p/M\right).
\end{align}
According to \cite{NACRC2006, NOMPTSP2016}, the Newton method converges to the solution of $g'(x_p)=0$ quadratically, if the initial guess $x_{0,p}$ lies in an interval $I_p$ around the true solution where the following conditions CI-CIII are met:
\begin{itemize}
  \item [CI] $g''(x_p)\neq0$, $\quad \forall x_p \in {I}_p$,
  \item [CII] $g'''(x_p)$ is finite, \quad $\forall x_p\in {I}_p$,
  \item [CIII] $|x_{0,p}|<1/W$, where $W_p \triangleq \sup_{x_p\in {I}}0.5\big|\frac{g'''(x_p)}{g''(x_p)}\big|$.
\end{itemize}

Here the minimum oversampling factor is investigated numerically. The simulation parameters are set as shown in Table \ref{SimPara} where $M=16$ and $N=64$. The function $g(x_p)$ (\ref{gxdef}) and its derivatives $g'(x_p)$, $g''(x_p)$, $g'''(x_p)$ in a window around the origin are evaluated and shown in Fig. \ref{gder}. In addition, the $\tilde{g}_{\mathbf d}(x_p)\triangleq (S_0(p,0)-1)/(2(N-1))$ and its derivatives corresponding to two random realization of $\{d_n\}_{n=0}^{N-1}$ are also plotted, where $d_n$ is independently and uniformly drawn from the discrete set $\{0,1,\cdots,M-1\}$. It shows that $g(x_p)$, as an average of $\tilde{g}_{\mathbf d}(x_p)$ with respect to ${\mathbf d}$ is very close to its realizations. This implies that analyzing $g(x_p)$ is enough to determine the minimum oversampling factor. The first two conditions CI and CII are met for the interval ${I}_p\subset (-1,1)$. Therefore, it only remains to satisfy the third condition. Note that for an interval $I_{0,p}$ satisfying CIII, any interval $I_p\subseteq I_{0,p}$ satisfies CIII. Thus we need to find a interval satisfying CIII as wider as possible.

To determine $I_{{\rm max},p}$, we proceed as follows: Given an interval $I_p$, the maximum $x_{{\rm max},p}=|I_p|/2$ of $x_{0,p}$ and $1/W_p$ are obtained, respectively. Then the relationship between $|I_p|$ and $x_{{\rm max},p}$, the relationship between $|I_p|$ and $1/W_p$ are obtained, respectively, as shown in Fig. \ref{IW_p}. In addition, the results corresponding to the same two random realization of $\{d_n\}_{n=0}^{N-1}$ in Fig. \ref{gder} are also plotted and match that of the averaged case well. It shows that setting $I_p$ as ${I}_p\subset (-0.28,0.28)$, then for arbitrary $x_p \in {I_p}$, one has $W_p\approx 1/0.28$ and $|x_{0,p}|<1/W_p$. In this setting, the spacing of the normalized grid should be less than $0.56$, which is equivalent to the minimal oversampling factor $1/0.56\approx 1.79$. The analysis shows that even for a single target and in a noiseless setting, the number of grids should be larger than $1.79M$.
\begin{figure}
  \centering
  \includegraphics[width=3.0in]{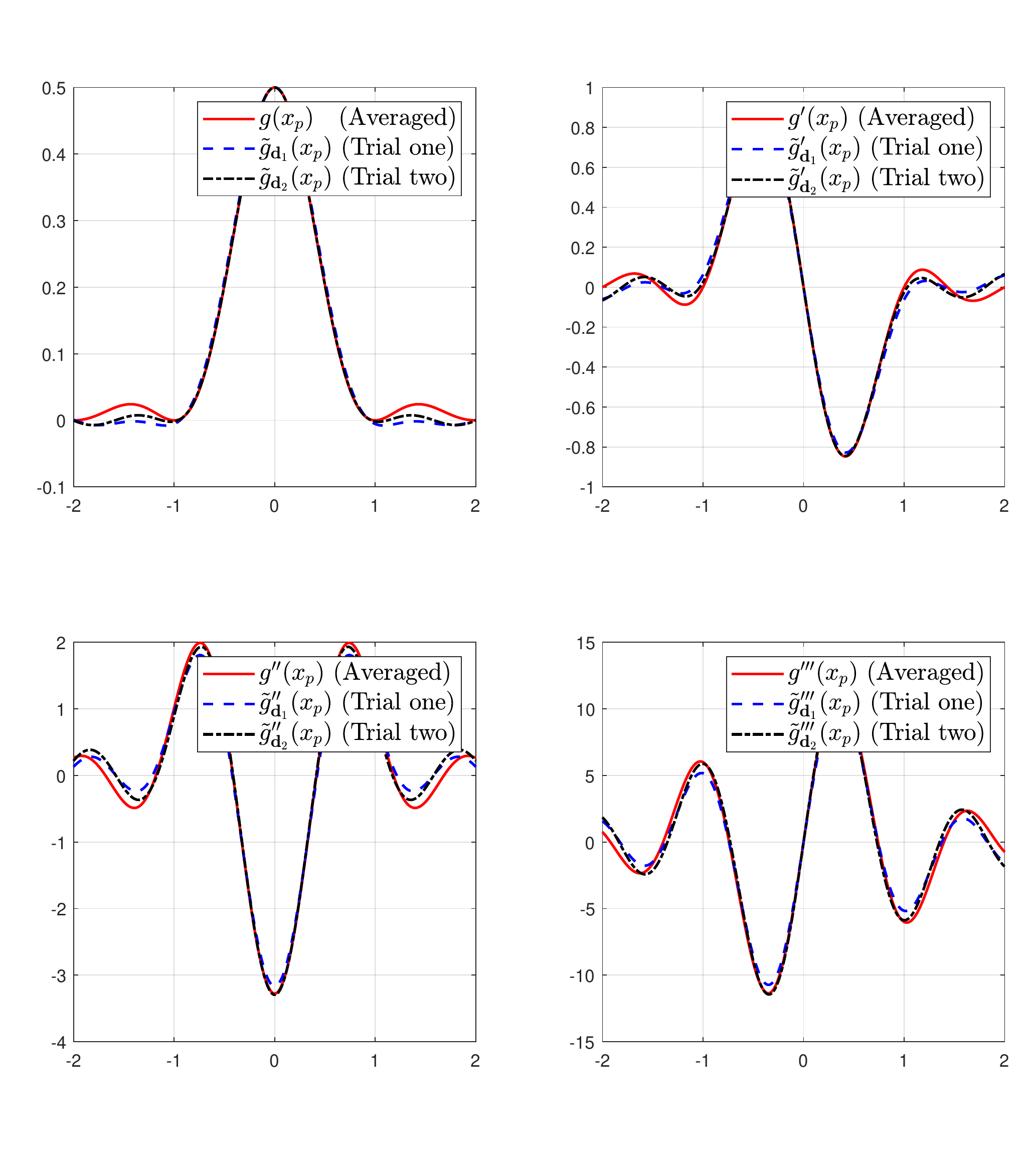}\\
  \caption{$g(x_p)$ and its derivatives. Here $M=16$ and $N=64$. In addition, the $\tilde{g}_{\mathbf d}(x_p)\triangleq (S_0(p,0)-1)/(2(N-1))$ and its derivatives corresponding to two random realization of $\{d_n\}_{n=0}^{N-1}$ are also plotted.}
  \label{gder}
\end{figure}

\begin{figure}
  \centering
  \includegraphics[width=3.0in]{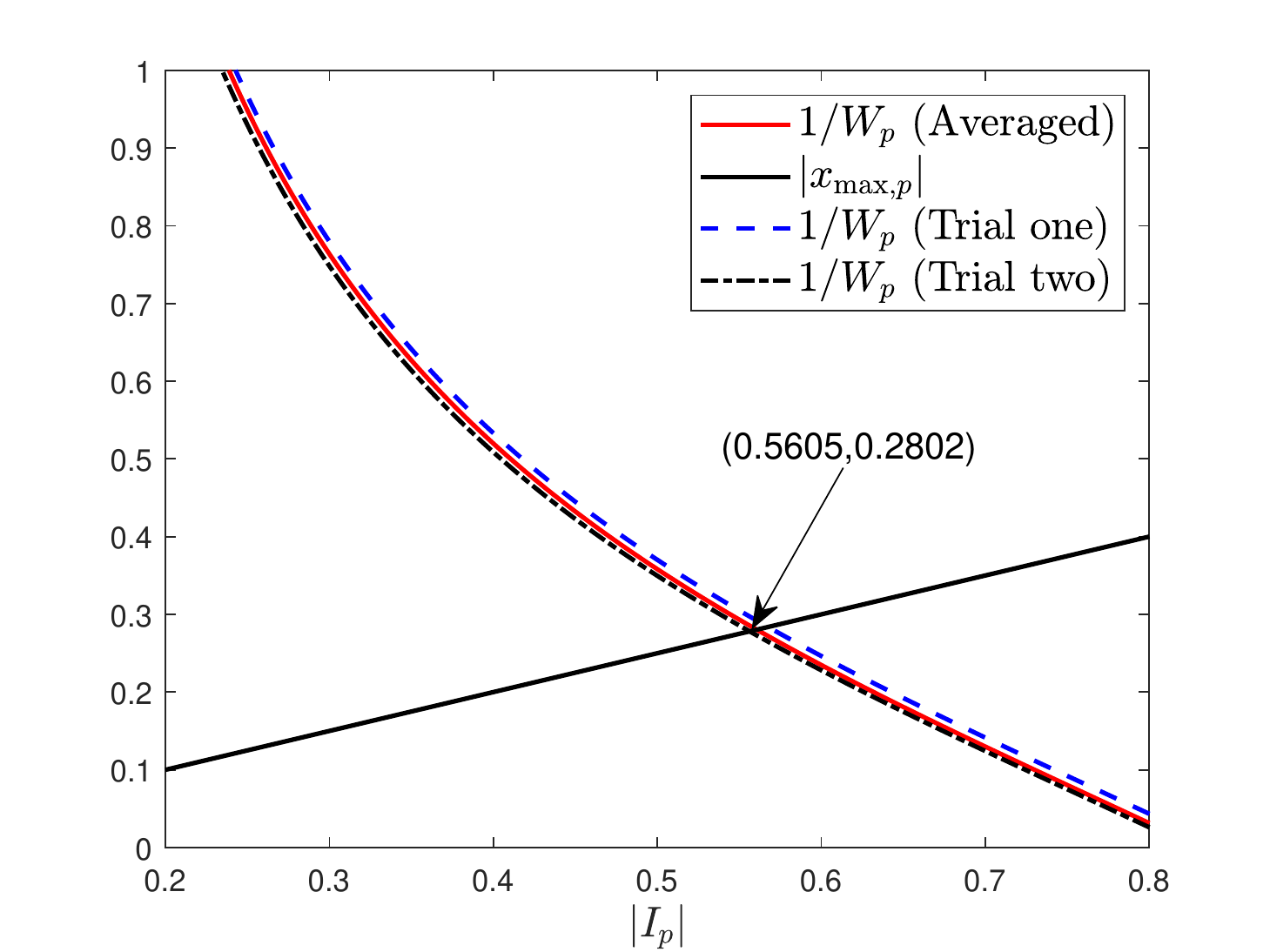}\\
        \caption{$x_{{\rm max},p}$ and $1/W_p$ versus the interval length $|I_p|$. The cross point between $x_{{\rm max},p}(|I_p|)$ and $1/W_p(|I_p|)$ is $(0.5605,0.2802)$.}
\label{IW_p}
\end{figure}

%
%
%
%

\subsection{$p=0$ Case}\label{pzero}
For $p=0$, (\ref{barS0pq}) is simplified as
\begin{align}
	\label{S0qdef}
	\bar{S}(0,q)&={\rm E}[S(p,q)|_{p=0}]= 1 + \frac{1}{N}\sum\sum\limits_{m\neq n}{\rm e}^{{\rm j}q(n-m)\left[1+\frac{(M-1)\Delta f}{2f_c}\right]} h_M\left(\frac{\Delta f}{f_c}nq\right) h_M\left(\frac{\Delta f}{f_c}mq\right).
\end{align}
Define the Nyquist frequency $\Delta_{{\rm Nyq},q}$ and the normalized digital frequency $x_q$ as
\begin{align}
\Delta_{{\rm Nyq},q} &= 2\pi/N\notag\\
x_q &= q/\Delta_{{\rm Nyq},q},\notag
\end{align}
where $x_q\in [-N/2,N/2)$. As a result, one can similarly define the function $\bar{S}(0,x_q)$ by substituting $q=\Delta_{{\rm Nyq},q}x_q$ in $\bar{S}(0,q)$.


For practical parameters, one has $\frac{\Delta f}{f_c}\ll 1$. In addition, the local behaviour around $q\approx 0$ is with great interest. Thus we approximate $h_M\left(\frac{\Delta f}{f_c}nq\right)$ and $h_M\left(\frac{\Delta f}{f_c}mq\right)$ as $h_M\left(\frac{\Delta f}{f_c}nq\right) \approx 1$, $h_M\left(\frac{\Delta f}{f_c}mq\right)\approx 1$. In this setting, ${\rm E}[S(p,q)|_{p=0}]$ can be approximated as
\begin{align}
\bar{S}_0(0,q)&={\rm E}[S(p,q)|_{p=0}] \approx Nh_N^2\left(q\left[1+\frac{(M-1)\Delta f}{2f_c}\right]\right)\triangleq \bar{S}_{0,{\rm app}}(0,q).
\end{align}
By reformulating $\bar{S}_{0,{\rm app}}(0,q)$ in terms of $x_q$, one has
\begin{align}\label{Sappqdef}
\bar{S}_{0,{\rm app}}(0,x_q) = Nh_N^2\left(\eta \frac{2\pi}{N}x_q\right).
\end{align}
where $\eta = 1+\frac{(M-1)\Delta f}{2f_c}$. Comparing with (\ref{gxdef}), $\bar{S}_{0,{\rm app}}(0,x_q)$ (\ref{Sappqdef}) involves an additional factor $\eta$ for $M=N$ (ignoring the proportional factor). Therefore, the oversampling factor of the $q$ should be $\eta$ times than that of the $p$ when $M=N$. For example, the typical parameters setting shown in Table \ref{SimPara} shows that $\eta=1.01$. As for the derivatives of $\bar{S}_{0,{\rm app}}(0,x_q)$, they similar to these of $g(x_p)$ and are omitted here.

Fig. \ref{S0andder} shows $\bar{S}_0(0,x_q)$, $\bar{S}_{0,{\rm app}}(0,x_q)$ and its derivatives. In addition, the random realizations of $\bar{S}_0(0,x_q)$ are also evaluated. It can be seen that the approximation $\bar{S}_{0,{\rm app}}(0,x_q)$ of $\bar{S}_0(0,x_q)$ is very accurate. In addition, the random realizations of ${S}_0(0,x_q)$ is close to $\bar{S}_0(0,x_q)$ averaged over $\mathbf d$. Therefore, we can analyze $\bar{S}_{0,{\rm app}}(0,x_q)$ (\ref{Sappqdef}) instead. Similar to the previous analysis of $g(x_p)$, $x_{{\rm max},q}$ and $1/W_q$ versus the interval length $|I_q|$ are evaluated and shown in Fig. \ref{IW_q} for the simulation parameter settings in Table \ref{SimPara}. It can be seen that the maximum grid spacing of the frequency should be less than $0.55$, which is equivalent to the minimal oversampling factor $1/0.55\approx1.82$.
\begin{figure}
  \centering
  \includegraphics[width=3.0in]{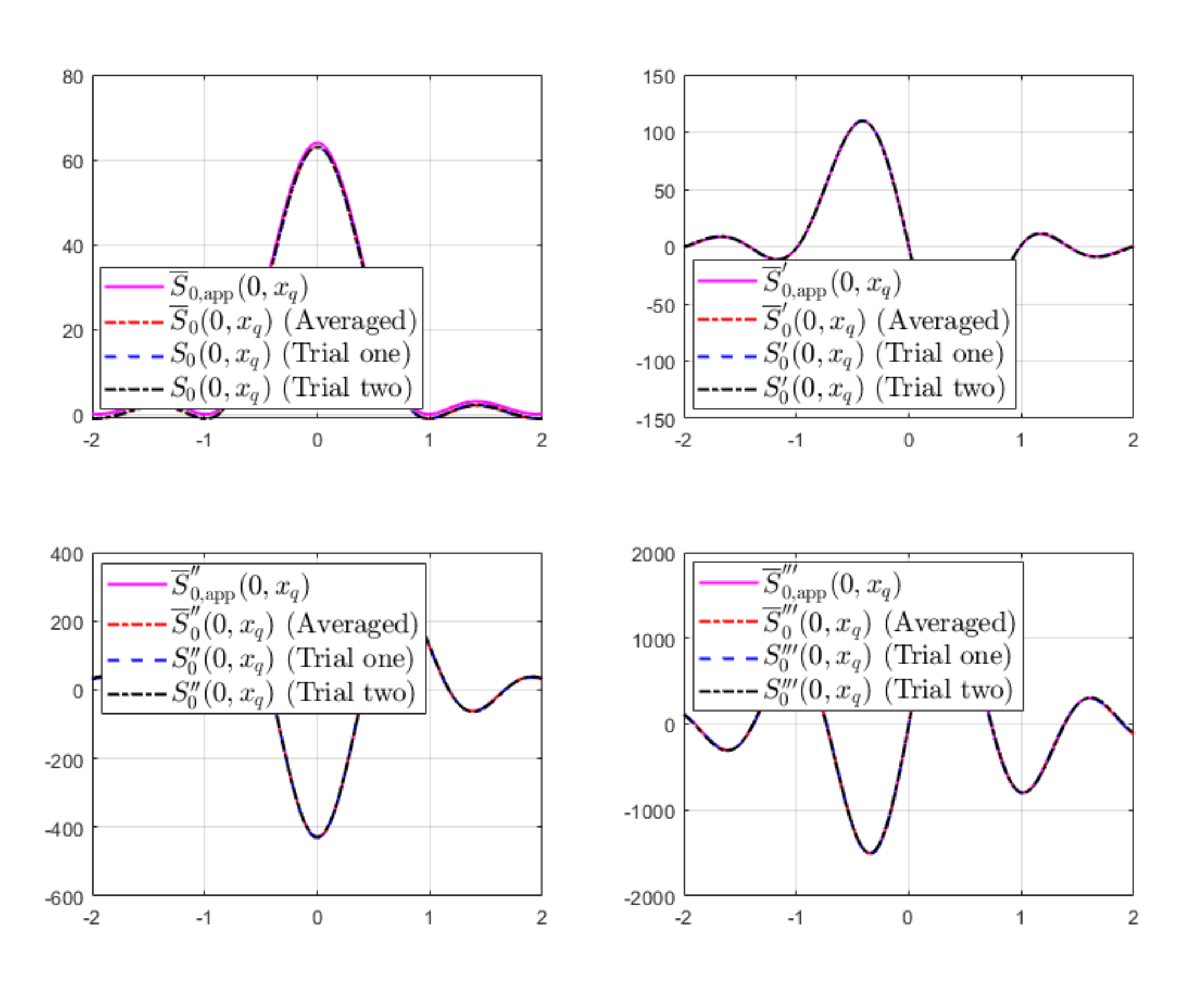}\\
  \caption{$\bar{S}_{0}(0,x_q)$, $\bar{S}_{0,{\rm app}}(0,x_q)$ and their derivatives. Here $M=16$ and $N=64$. In addition, the ${S}_{0}(0,x_q)$ and its derivatives corresponding to two random realization of $\{d_n\}_{n=0}^{N-1}$ are also plotted.}
  \label{S0andder}
\end{figure}

\begin{figure}
  \centering
  \includegraphics[width=3.0in]{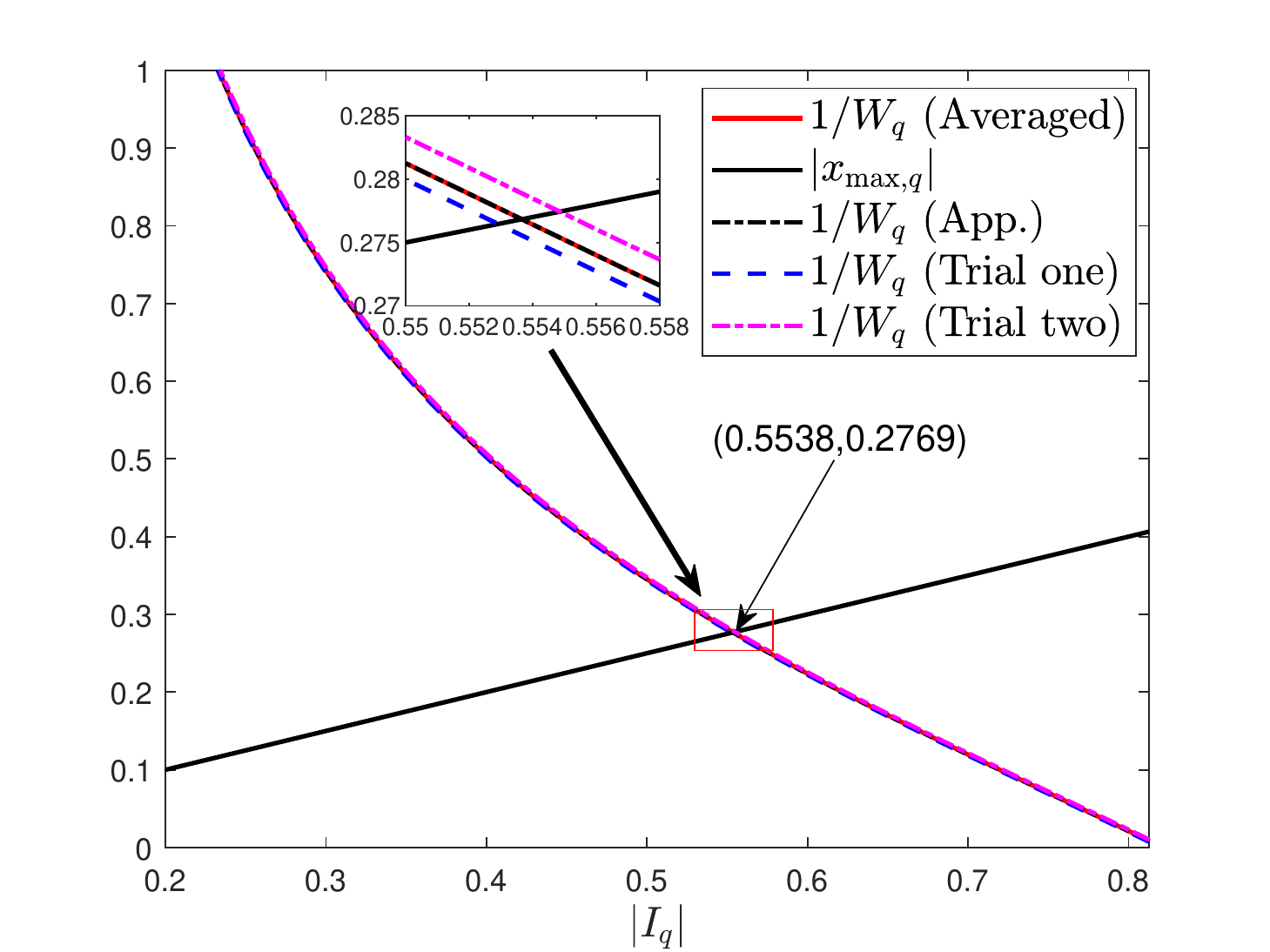}\\
  \caption{$x_{{\rm max},q}$ and $1/W_q$ versus the interval length $|I_q|$. The cross point between $x_{{\rm max},q}(|I_q|)$ and $1/W_q(|I_q|)$ is $(0.5538,0.2769)$.}
  \label{IW_q}
\end{figure}


%
\subsection{$p\neq0$ and $q\neq0$ general case}
For the general case $\bar{S}_0(p,q)$ (\ref{barS0pq}), it is hard to analyze it directly. As we are interested in the local behaviour around the origin $p=q=0$, $\bar{S_0}(p,q)$ (\ref{barS0pq}) can be simplified further. As analyzed in Subsection \ref{qzero} and \ref{pzero}, for the given simulation parameters shown in Table \ref{SimPara}, $|p|\leq 0.2802\Delta_{{\rm Nyq},p}=0.11 $ and $|q|\leq 0.2768\Delta_{{\rm Nyq},q}=0.0272$. In addition, $n\Delta f/f_c q \leq 0.085\times 0.0272=0.0023$ which is small, leading to $h_M\left(p+\frac{n\Delta f}{f_c}q\right)\approx h_M(p)$ when $0.11\geq |p|\gg 0.0023$. Therefore $\bar{S}_0(p,q)$ is approximated as
\begin{align}\label{barS0pqapp}
\bar{S_0}(p,q)\approx 1 +\frac{1}{N}\sum\sum\limits_{m\neq n}{\rm e}^{{\rm j}q(n-m)\left[1+\frac{(M-1)\Delta f}{2f_c}\right]} h_M^2\left(p\right)
=Nh_N^2\left(q\left[1+\frac{(M-1)\Delta f}{2f_c}\right]\right)h_M^2\left(p\right).
\end{align}
Now the previous analysis in Subsection \ref{qzero} and \ref{pzero} applies here. While when $|p|\leq 0.0023$, it can be checked that $h_M\left(p+\frac{n\Delta f}{f_c}q\right)\approx 1$ and Subsection \ref{pzero} applies here. Therefore, the previous analysis provides the orders of the oversampling factors. In practice, due to the random noise and intertarget interference, the oversampling factors must be set higher. Empirically, it is found that setting the oversampling factors as $\gamma_p=4$ and $\gamma_q=4$ or more works well.
\section{Full Range Bin Processing}
In practice, the goal is to recover the range of interests which is not restricted into a single range bin. The NOMP-FAR is designed for each range bin. It seems that one can recover the targets in each range bin and then the whole scene can be reconstructed. This is not so. Note that for the original model (\ref{form12}), the length of the interval where the ambiguity function $\chi(\cdot,\cdot)\neq 0$ is much longer than the sampling interval, the targets in the considered range bin can cause false alarm in the nearby range bin. To identify the false alarms, model (\ref{form12}) is analyzed in depth.

We make the following assumptions: There is only a single target in the $l$th range bin with the absolute range $r$ and velocity $v$, which corresponds to the digital frequency $p_{l}$ and $q_{l}$. Suppose that we run the NOMP-FAR for each range bin and recover a false target in the $l^{\prime}$th range bin with digital frequency $p_{l^{\prime}}$ and $q_{l^{\prime}}$. The following proposition establishes the relationship between $p_{l}$, $q_{l}$ and $p_{l^{\prime}}$, $q_{l^{\prime}}$.
\begin{proposition}
The relationship between $p_{l}$, $q_{l}$ and $p_{l^{\prime}}$, $q_{l^{\prime}}$ are: There exists an integer $h\in Z$ such that
\begin{align}\label{plreship}
p_l-2\pi \Delta f(l-l^{\prime}){{T}_{s}}=h2\pi+p_{l^{\prime}},
\end{align}
and
\begin{align}\label{qlreship}
\quad q_l=q_{l^{\prime}},
\end{align}
where $p_l,p_{l^{\prime}},q_l,q_{l^{\prime}}\in [-\pi,\pi)$.
\end{proposition}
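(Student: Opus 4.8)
The plan is to track how the digital frequencies $p$ and $q$ are tied to the physical delay $\tau_0$ and velocity $\nu$, and to exploit the fact that the only quantity that changes when we move from the true range bin $l$ to the neighbouring bin $l'$ is the local time origin $t_l=(l-1)T_s$ entering the HRRP definition (\ref{HRRPdef}). Concretely, I would first write the pulse-compressed signal of the single physical target evaluated at the sampling instant $t_{l'}$, using the same approximations that lead to (\ref{form9})--(\ref{form12}). Because $\tau_0$ and $\nu$ are properties of the target and do not depend on which bin we process, the per-bin ``unwrapped'' phase coefficients multiplying $d_n$ and $(1+d_n\Delta f/f_c)n$ are $\tilde p_{l'}=-\tfrac{4\pi\Delta f}{c}R_{l'}=-2\pi\Delta f(\tau_0-t_{l'})$ and $\tilde q_{l'}=-\tfrac{4\pi f_c T}{c}\nu$, obtained by substituting $R_{l'}=(\tau_0-t_{l'})c/2$ into the definitions of $p$ and $q$ stated just below (\ref{form12}).

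The velocity relation (\ref{qlreship}) is then immediate: since $\tilde q_{l'}=-\tfrac{4\pi f_cT}{c}\nu$ is manifestly independent of $l'$, the unwrapped velocity-phase coefficient in bin $l'$ equals the one in bin $l$, so after wrapping both into $[-\pi,\pi)$ one gets $q_l=q_{l'}$. For the range relation I would compute the difference of the unwrapped coefficients directly,
\begin{align}
\tilde p_l-\tilde p_{l'}=-2\pi\Delta f\big[(\tau_0-t_l)-(\tau_0-t_{l'})\big]=-2\pi\Delta f\,(t_{l'}-t_l)=2\pi\Delta f\,(l-l')T_s,
\end{align}
using $t_l=(l-1)T_s$. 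Rearranging gives $\tilde p_l-2\pi\Delta f(l-l')T_s=\tilde p_{l'}$, which is (\ref{plreship}) at the level of unwrapped phases.

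The remaining, and to my mind the only delicate, step is the wrapping argument that produces the integer $h$. The NOMP-FAR search is carried out over the fundamental domain $p\in[-\pi,\pi)$, $q\in[-\pi,\pi)$, so what the algorithm actually returns in bin $l'$ is the representative $p_{l'}\equiv\tilde p_{l'}\pmod{2\pi}$ rather than $\tilde p_{l'}$ itself. The key enabling fact is that the code words $d_n$ are integers: replacing $\tilde p_{l'}$ by $p_{l'}=\tilde p_{l'}-2\pi k_{l'}$ leaves the atom $\mathbf a(p_{l'},q_{l'})$ unchanged because ${\rm e}^{{\rm j}2\pi k_{l'} d_n}=1$ for every $n$, and this same identity shows the $2\pi$ shift in the $d_n$-phase does not leak into the velocity coordinate through the coupled term ${\rm e}^{{\rm j}q(1+d_n\Delta f/f_c)n}$, so the clean identity $q_l=q_{l'}$ survives. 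Writing $\tilde p_l=p_l+2\pi k_l$ and $\tilde p_{l'}=p_{l'}+2\pi k_{l'}$ in the displayed identity and setting $h=k_{l'}-k_l\in\mathbb Z$ yields exactly (\ref{plreship}). I would flag that the whole argument holds at leading order, i.e.\ under the same constant-$\chi$ approximation used to pass from (\ref{modelam}) to (\ref{modelfinal}), so that the bin-$l'$ data are genuinely proportional to a single atom whose parameters can be read off in this way.
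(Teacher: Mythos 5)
Your proof is correct and follows essentially the same route as the paper's: both express the pseudo high-resolution range in bin $l'$ through the target's true delay, use the linear map between digital frequency and range to obtain $\tilde p_l-\tilde p_{l'}=2\pi\Delta f\,(l-l')T_s$, and absorb the fundamental-domain wrapping into the integer $h$. The only difference is that you spell out what the paper leaves implicit, namely the $2\pi$-periodicity of the atom in $p$ (via the integer codes $d_n$) that legitimizes the wrapping step, and the bin-independence of the velocity coefficient behind the paper's remark that $q_l=q_{l'}$ is ``obvious.''
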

\begin{proof}
For the $l$th range bin, according to (\ref{ptor}) and (\ref{qtov}), the relationship between the digital frequency $p_{l}$, $q_{l}$ and the range $r_l$, velocity $v_l$ are
\begin{align}\label{rl}
r_l=-\frac{c}{4\pi\Delta f}p_l+\frac{c(l-1)}{2F_s},
\end{align}
and
\begin{align}
v_l=-\frac{c}{4\pi f_c T}q_l.
\end{align}
For the $l^{\prime}$th range bin, according to (\ref{HRRPdef}), its pseudo high range resolution profile is $R_{l^{\prime}}\triangleq (\tau_0-\tau_{l^{\prime}})c/2=r_l-(l^{\prime}-1)c/(2F_s)$, here the pseudo refers to the fact that the target in that range bin is nonexist and is caused by the sidelobe of the target in the $l$th range bin. Substituting (\ref{rl}) in $R_{l^{\prime}}$, one obtains
\begin{align}\label{Rlprime}
R_{l^{\prime}}=-\frac{c}{4\pi\Delta f}p_l+\frac{c(l-l^{\prime})}{2F_s},
\end{align}
The relationship between the digital frequency $p_{l^{\prime}}$ and $R_{l^{\prime}}$ is
\begin{align}\label{Rlprimevsp}
p_{l^{\prime}}=2\pi h-\frac{4\pi \Delta f}{c}R_{l^{\prime}}.
\end{align}
Substituting (\ref{Rlprime}) in (\ref{Rlprimevsp}) and eliminating $R_{l^{\prime}}$, one obtains the desired result (\ref{plreship}). The result (\ref{qlreship}) is obvious.
\end{proof}

The false targets caused by the sidelobe must satisfy (\ref{plreship}) and (\ref{qlreship}), which can be used to eliminate the false alarms. However, there exist a scenario in which a true target is just well matched to the digital frequencies of the false alarms. Therefore, we further incorporate the amplitude constraints. Obviously, the target reflection coefficients satisfy $|\gamma_l|>|\gamma_{l^{\prime}}|$ and $|\frac{\gamma_{l^{\prime}}}{\gamma_l}|$ decreases sharply as $l$ deviates from $l^{\prime}$, which can be derived from the ambiguity function ${\rm \chi}(\xi,f_d)$ (\ref{amfdef}). Given that a target in range bin $l$ is identified as the ``true'' target via the proposed algorithm, the target obtained from the $l^{\prime}$th range bin is recognized as the false target if the following two conditions are satisfied:
\begin{itemize}
  \item [(C1)] The digital frequencies satisfy (\ref{plreship}) and (\ref{qlreship}).
  \item [(C2)]  The amplitude estimates $\hat{\gamma}_{l^{\prime}}$ and $\hat{\gamma}_l$ satisfy
  \begin{align}\label{Conds}
  20\log|\frac{\hat{\gamma}_{l^{\prime}}}{\hat{\gamma}_l}|\geq \begin{cases}
  \zeta_1,\quad |l-l^{\prime}|\leq  l_0\\
  \zeta_2,\quad {\rm otherwise},
  \end{cases}
  \end{align}
  where $l_0$, $\zeta_1$ and $\zeta_2$ are prespecified parameters.
\end{itemize}
Otherwise we regard the target obtained from the $l^{\prime}$th range bin as the ``true'' target. It is worth noting that the above two conditions only provide a way to avoid false alarms and simultaneously detect the true targets. It is heuristic and can be ineffective in extreme cases.

As a result, we propose a postprocessing algorithm termed as NOMP-FAR+Postprocessing for identify the false alarms caused by the targets in the nearby range bin, which proceeds as follows:
\begin{itemize}
  \item For each range bin $l$ where $l=l_1,\cdots,l_N$, run the NOMP-FAR to obtain the targets and sort in descending order based on the reflection amplitude $|\hat{\gamma}_{lk}|$ for each range bin. Let the reconstructed results be $\{{\mathcal P}_l\}_{l=l_1}^{l_N}$, where ${\mathcal P}_l=\{(\hat{\gamma}_{lk},\hat{p}_{lk},\hat{q}_{lk})\}_{k=1}^{\hat{K}_l}$.
  \item Recursively identify the false alarms by comparing the remaining targets with the target with the largest reflection amplitude. If both conditions (C1) and (C2) are satisfied, the target in the nearby range bin is deleted and the set ${\mathcal P}_l$ is updated. Then identify the false alarms by comparing the remaining targets with the target with the second largest reflection amplitude. Go on until all are compared.
\end{itemize}
As we show later in the numerical experiments, this method suppresses the false alarms effectively.
\section{Numerical Simulation}
\begin{table}[h!t]
\begin{center}\caption{The simulation parameters setting.}\label{SimPara}
\begin{tabular}{|c|c|}
  \hline
  Parameters & Value \\\hline
  Bandwidth $B$ & 4 MHz \\\hline
  Pulse Duration $T_p$ & 200 $\mu$s \\\hline
  Chirp Rate $\kappa$ & $2\times 10^{10}$ Hz/s \\\hline
  PRI $T$  & 1.5 ms \\\hline
  $f_c$ & 3 GHz \\\hline
  $\Delta f$ & 4 MHz \\\hline
  Sampling Frequency $F_s=B$ & 4 MHz \\\hline
  $N$ & 64 \\\hline
  $M$ & 16 \\\hline
  $R_u=c/(2\Delta f)$ & 38.5 m \\\hline
  $v_u=c/(2f_c T)$ & 33.3 m/s \\\hline
  Noise Variance $\sigma^2$ & 1 \\\hline
  Number of Measurements $N_{\rm ref}=\lfloor T_p F_s\rfloor+1$ & 801 \\
  \hline
\end{tabular}
    \end{center}
\end{table}
In this section, numerical experiments are conducted to illustrate the effectiveness of the NOMP-FAR and validate the theoretical analysis, with the simulation parameters setting in Table \ref{SimPara}.

To make performance comparison, the OMP algorithms with Nyquist sampling $\gamma_p=\gamma_q=1$ and oversampling factor $\gamma_p=\gamma_q=4$ are implemented. As for the NOMP-FAR, the parameters are set as follows: $\gamma_p=\gamma_q=4$, $R_c=3$, $R_s=20$. For the stopping criterion, the thresholds are obtained via MC simulations averaged over $10^6$ trials with unit noise variance. Unless stated otherwise, the false alarm rate $P_{\rm FA}$ is set as $10^{-2}$ and the threshold $\tau$ is $13.31$ dB for OMP ($\gamma_p=\gamma_q=4$) and NOMP-FAR ($\gamma_p=\gamma_q=4$), while for the OMP ($\gamma_p=\gamma_q=1$), the threshold $\tau$ is $10.83$ dB. Note that for a general noise variance $\sigma^2$, the threshold is $\tau\sigma^2$. For the NOMP-FAR+Postprocessing algorithm, the parameters in (\ref{Conds}) are set as follows: $l_0=3$, $\zeta_1=2$ dB, $\zeta_2=15$ dB.

Five numerical experiments are conducted. The first experiment investigates the effectiveness of the CFAR based stopping criterion. For the second experiment, the SNR improvement via coherent integration is validated. The third experiment is to show that both oversampling and Newton refinement are useful for improving the reconstruction performance. For the fourth experiment, statistical simulations are conducted to evaluate the effectiveness of NOMP-FAR in terms of hit-rate and success-rate. The hit-rate is defined as
\begin{align}
{\rm hit-rate}=\frac{1}{K}|\{k\in [1,\hat{K}]:|p_l-\hat{p}_k|\leq \delta p\quad {\rm and}~|q_l-\hat{q}_k|\leq \delta q, l\in [1,K]\}|,
\end{align}
where $\delta p$ and $\delta q$ are tolerances in range and Doppler directions, $\{p_k,q_k\}_{k=1}^K$, $\{\hat{p}_k,\hat{q}_k\}_{k=1}^{\hat{K}}$ are true digital frequencies and reconstructed digital frequencies, respectively, and $\hat{K}$ may be not equal to $K$ as the NOMP-FAR can overestimate or underestimate the number of targets. Here we set the tolerances as the half interval of the nearby grids, i.e., $\delta p=2\pi/(2\gamma_pM)=\pi/(\gamma_pM)$ and $\delta q=2\pi/(2\gamma_qN)=\pi/(\gamma_qN)$. The success-rate is defined as the empirical probability of the event in which both the model order is estimated correctly and ${\rm hit-rate}=1$. Finally, the NOMP-FAR+Postprocessing is to suppress the false targets caused by the nearby range bins and reconstruct the whole range of interest.

\begin{figure}
  \centering
  \includegraphics[width=3.0in]{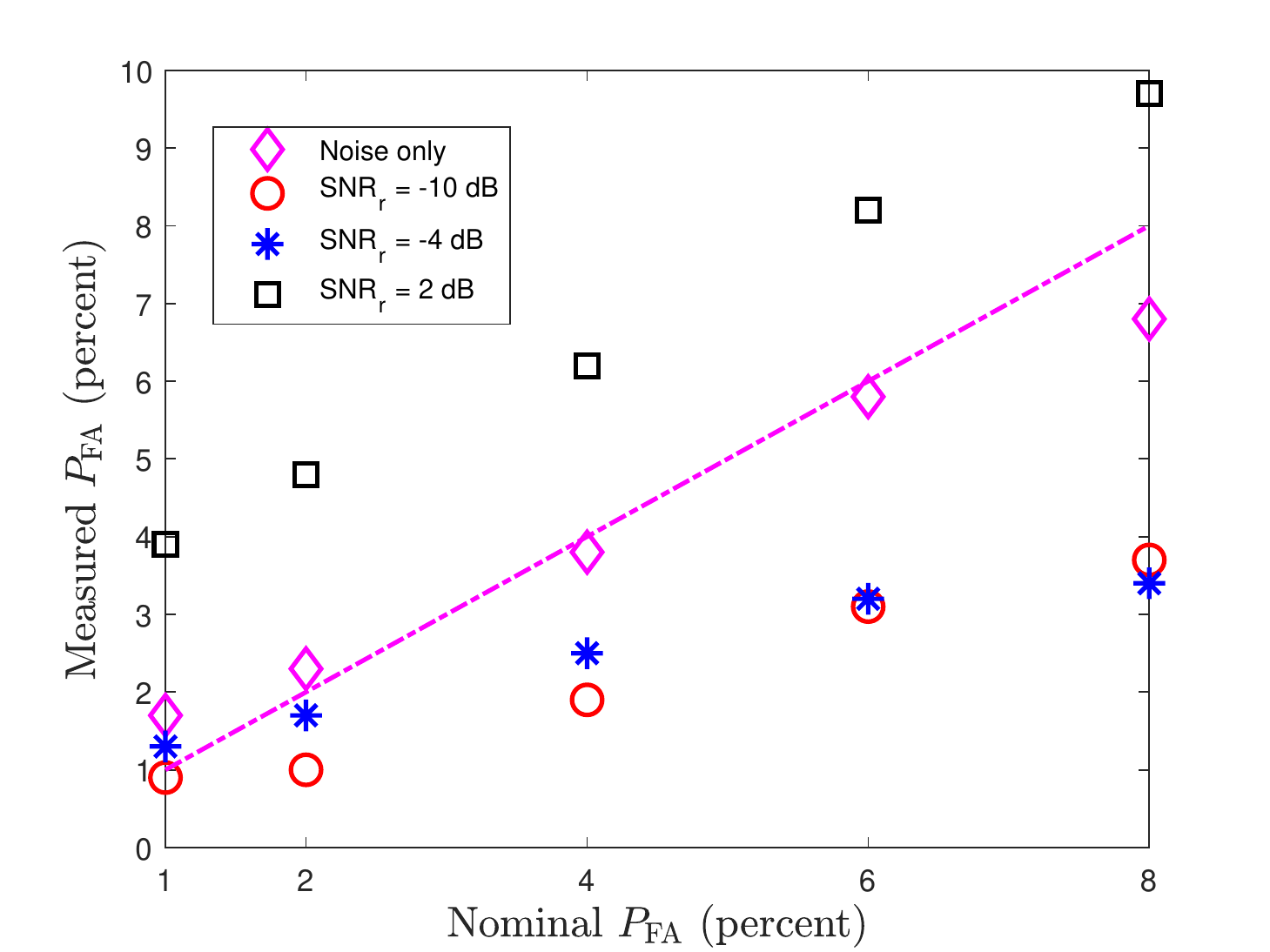}\\
  \caption{Nominal vs. measured probability of false alarm.}\label{Pfafig}
\end{figure}

\subsection{Validation of CFAR Design}

This experiment shows the effectiveness of CFAR design. Four targets are lying in the given range bin and NOMP-FAR is used to reconstruct the targets. The false alarm event is declared when NOMP-FAR overestimates the number of targets. The ``measured'' versus ``nominal'' false alarm rates for several SNRs are presented in Fig. \ref{Pfafig}. It can be seen that the empirical false alarm rates closely follow the nominal value at SNRs $-10$ dB, $-4$ dB and when the target is absent, while at a higher SNR $2$ dB, the empirical false alarm rate is always higher than the nominal value. The reason is that when SNR is high, the reconstruction accuracy of NOMP-FAR due to intertarget interference and the model approximation error where the ambiguity function ${\rm \chi}(\xi,f_d)$ in (\ref{modelam}) is dropped is not high, compared to the noise.
\begin{figure}
  \centering
  \includegraphics[width=3.0in]{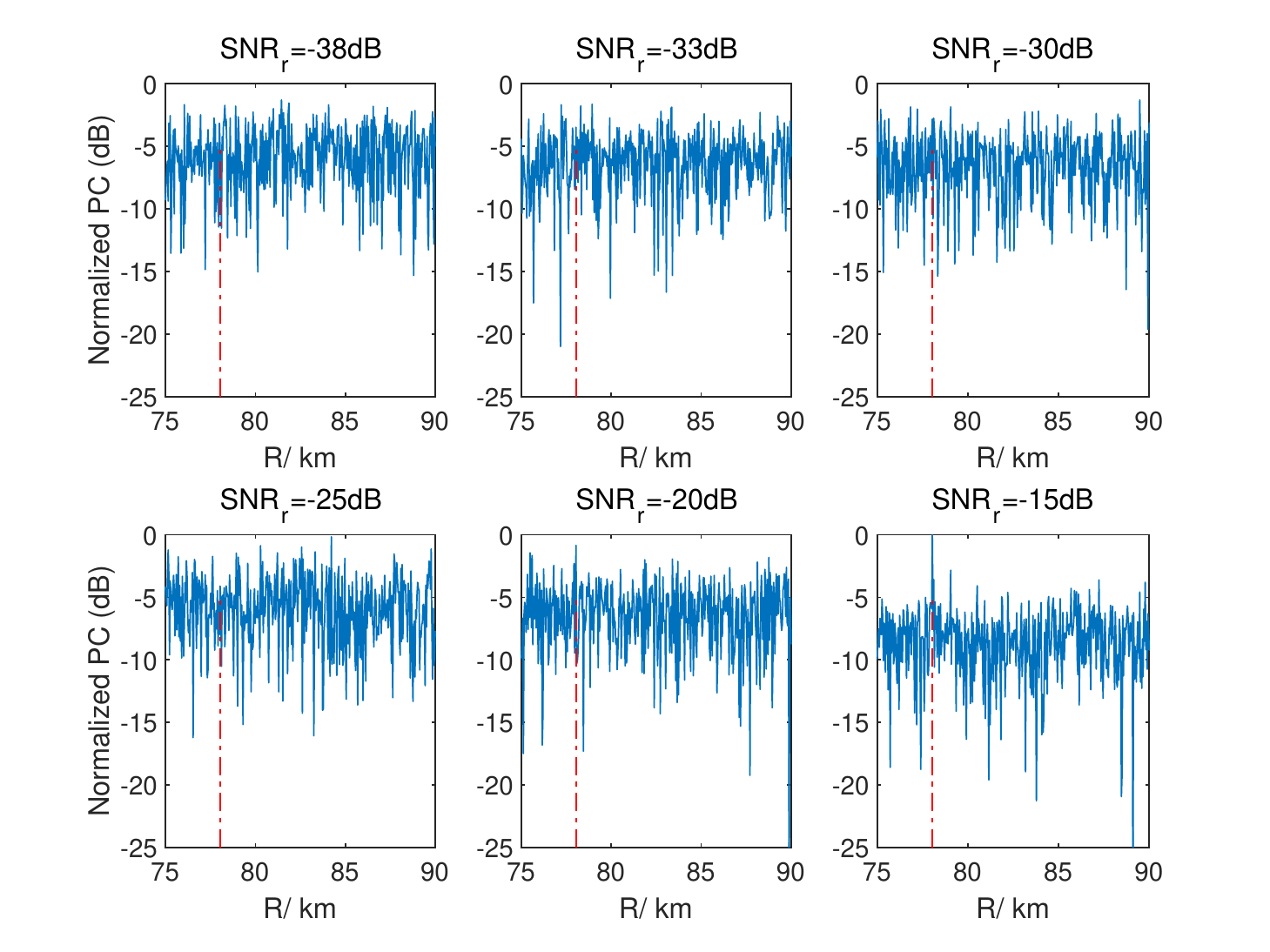}\\
  \caption{The normalized amplitude of PC at various SNRs. The dashed vertical line denotes the true target range.}\label{PCresfig}
\end{figure}
\begin{figure*}
  \centering
  \subfigure[]{
    \label{CIsig} 
    \includegraphics[width=65mm]{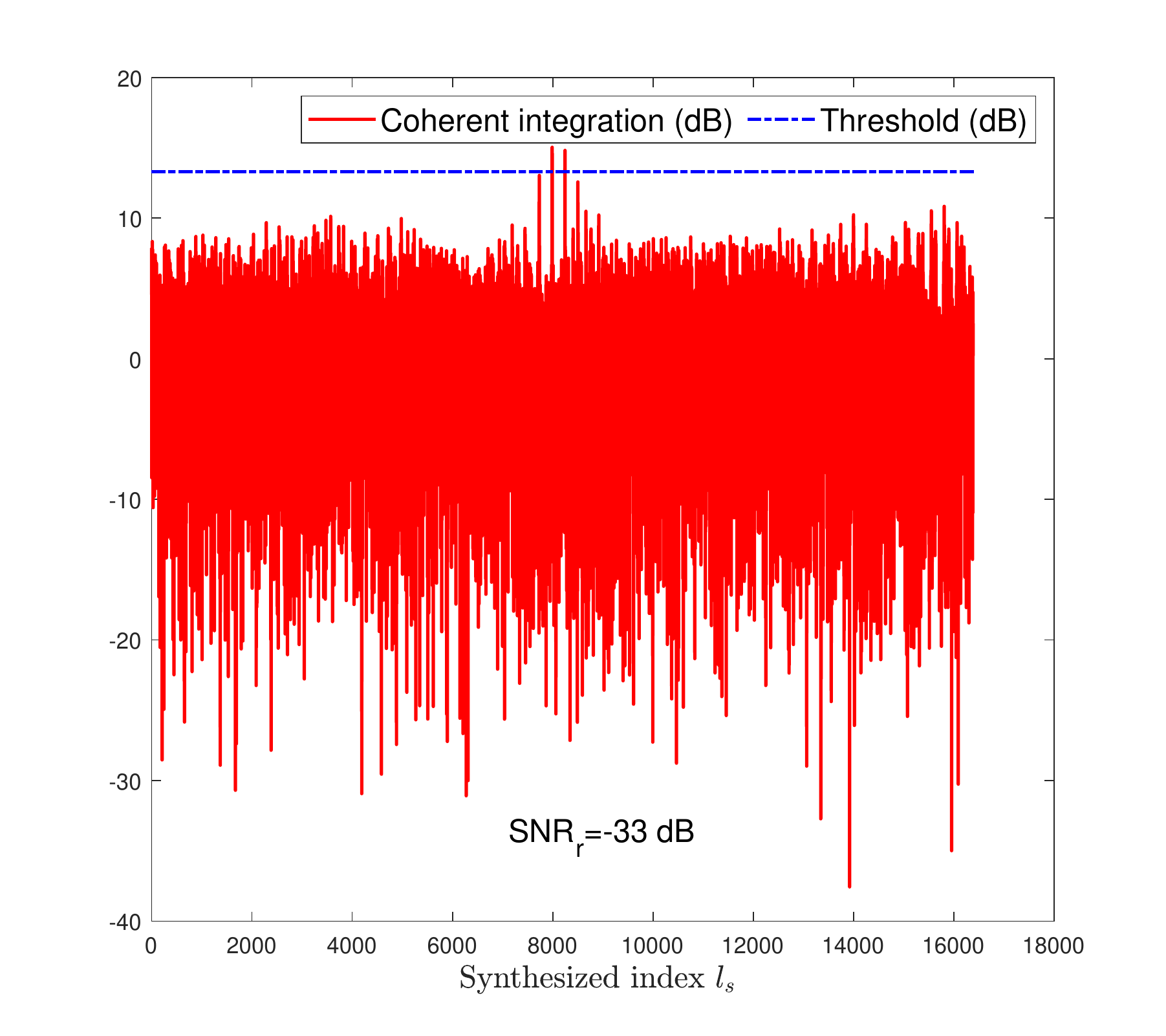}}\label{CIres}
    \subfigure[]{
    \includegraphics[width=65mm]{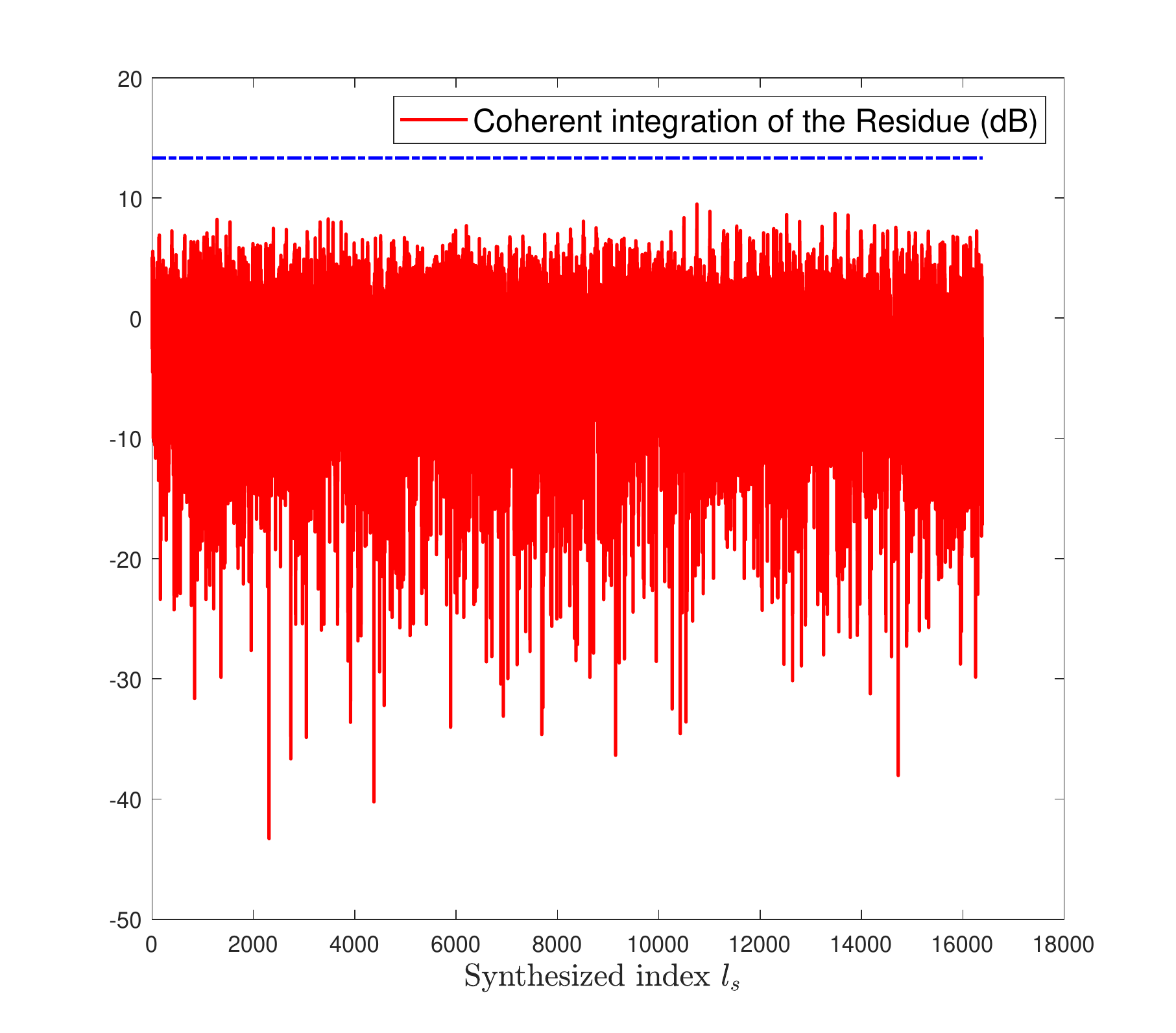}}
  \caption{The coherent integration results through $\|{\mathbf y}^{\rm H}{\mathbf a}(p,q)\|^2$ where $p\in \Omega_p$ and $q\in \Omega_q$ in a single realization. (a) Coherent integration of the signal, (b) Coherent integration of the residue after cancelling the first target.}
  \label{CIres0} 
\end{figure*}
\begin{figure}
  \centering
  \includegraphics[width=3.0in]{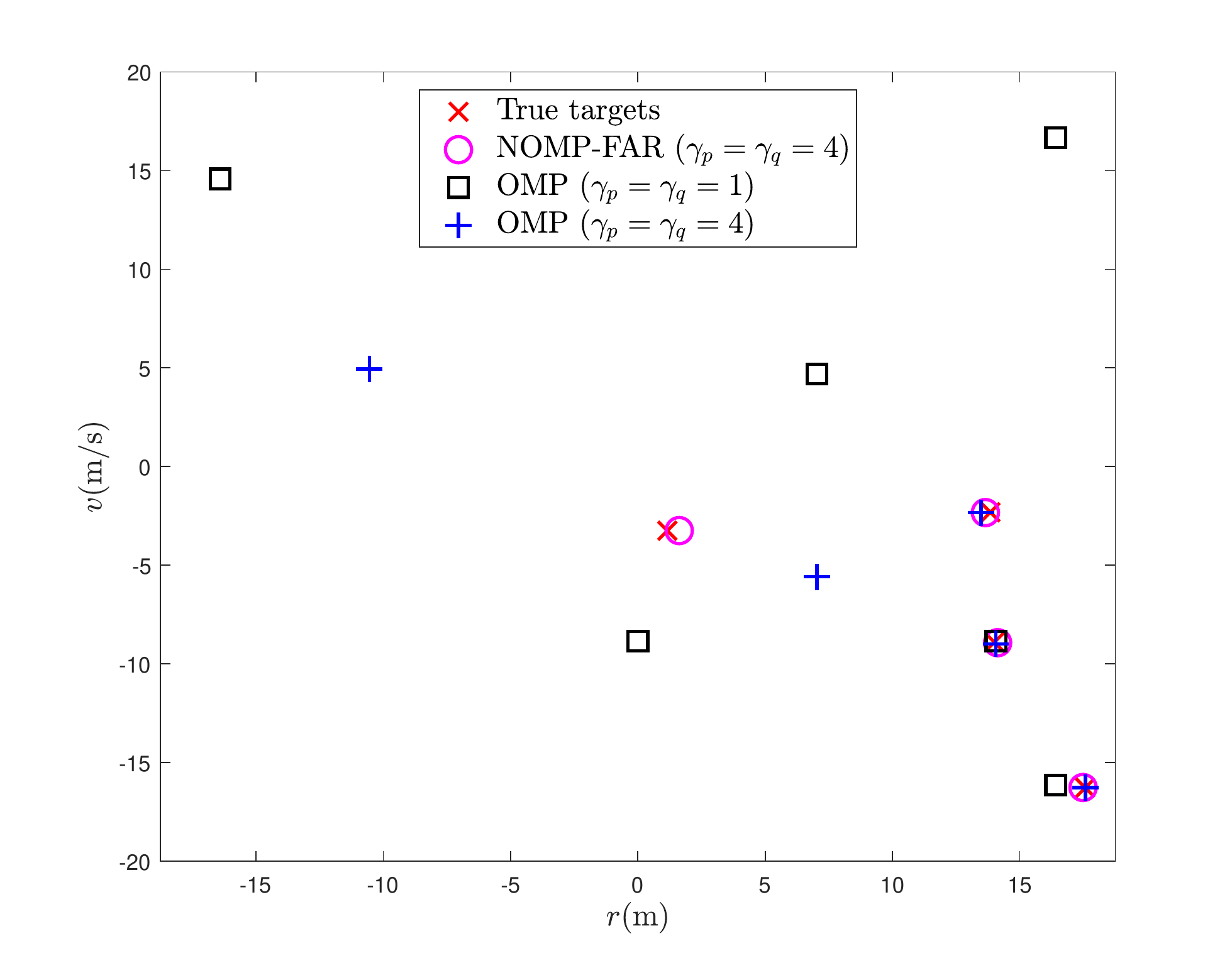}\\
  \caption{Performance comparison of NOMP-FAR and OMP in a single realization.}\label{OSandNRfig}
\end{figure}
\subsection{SNR Improvement}
This section shows the extremely low SNR under which the target can be detected. A single target locates at $78038$m with velocity $10$m/s. It can be calculated that the target lies in the $2082$ range bin exactly. Firstly, the PC results are obtained for various SNRs. For simplicity, the PC corresponding to the first pulse is plotted in Fig \ref{PCresfig} for a window limited to $75-90$ km. According to Subsection \ref{CIres}, the SNR improvement of PC is $10\log N_{\rm ref}=29.03$ dB. Therefore, the ${\rm SNR}_{\rm pc}={\rm SNR}_{\rm r}+10\log N_{\rm ref}$ are $-9,-4,-1,4,9,14$ dB for ${\rm SNR}_{\rm r}=-38,-33,-30,-25,-20,-15$ dB, respectively. It can be seen that when ${\rm SNR}_{\rm r}\geq -20$ dB, a peak appears at $78038$m corresponding to the true target localization, which can be used to identify the targets for ${\rm SNR}_{\rm r}\geq -20$ dB. The SNR improvement of coherent integration is $10\log N=18.06$ dB. Provided that the false alarm probability is setting as $P_{\rm FA}=10^{-2}$, the threshold is $13.31$ dB. From (\ref{CIcond}), when the original ${\rm SNR}_{\rm r}$ is larger than $-10\log N_{\rm ref}-10\log N+10\log\tau=-29.03-18.06+13.31=-33.78$ dB, the target can be detected. Here we set ${\rm SNR}_{\rm r}=-33$ dB. Define the synthesized index $l_{s}=k_p\gamma_qN+k_q+1$, where $k_p=0,1,\cdots,\gamma_pM-1$, $k_q=0,1,\cdots,\gamma_qN-1$. The coherent integration results versus the synthesized index $l_s$ are shown in Fig. \ref{CIres0}. It can be seen that after coherent integration, the signal level almost exceeds the threshold and the target is detected, validating the previous SNR analysis. Straightforward calculation yields the true digital frequency $p=-0.0838$ and $q=-1.8850$. The index $l_s$ corresponding to the largest peak is $l_s=7988$. Therefore $k_p=31$ and $k_q=51$, which correspond to the digital frequencies $\hat{p}=k_p2\pi/(\gamma_pM)-\pi=-0.0982$ and $\hat{q}=k_q2\pi/(\gamma_qN)-\pi= -1.8899$ and are close to the true digital frequencies $p$ and $q$. After cancelling this target, all the residue signal level are below the threshold, demonstrating that the residue is well approximated as the noise.

\begin{figure*}
  \centering
  \subfigure[]{
    \label{Hitrate} 
    \includegraphics[width=65mm]{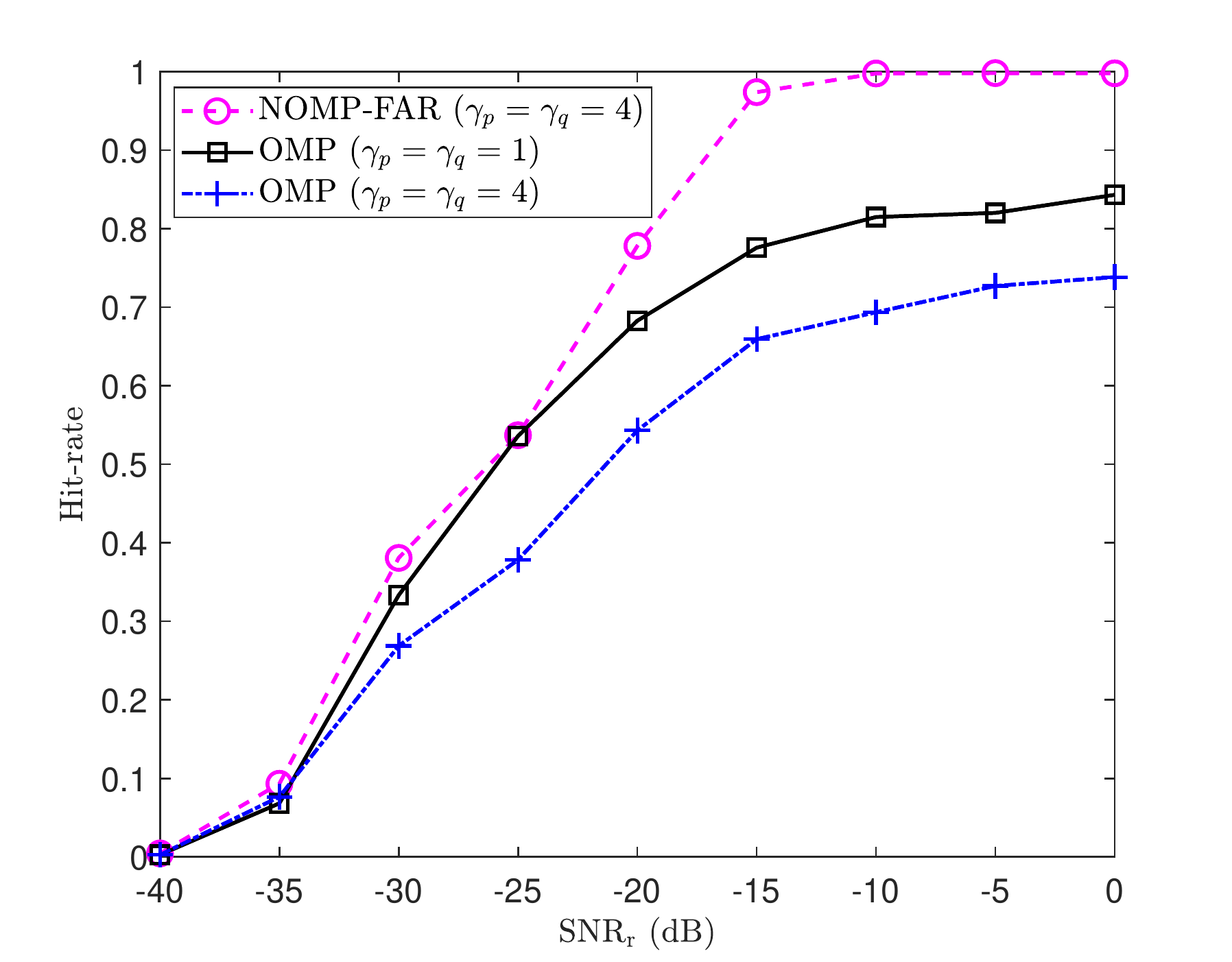}}
    \subfigure[]{
    \label{Sucs}
    \includegraphics[width=65mm]{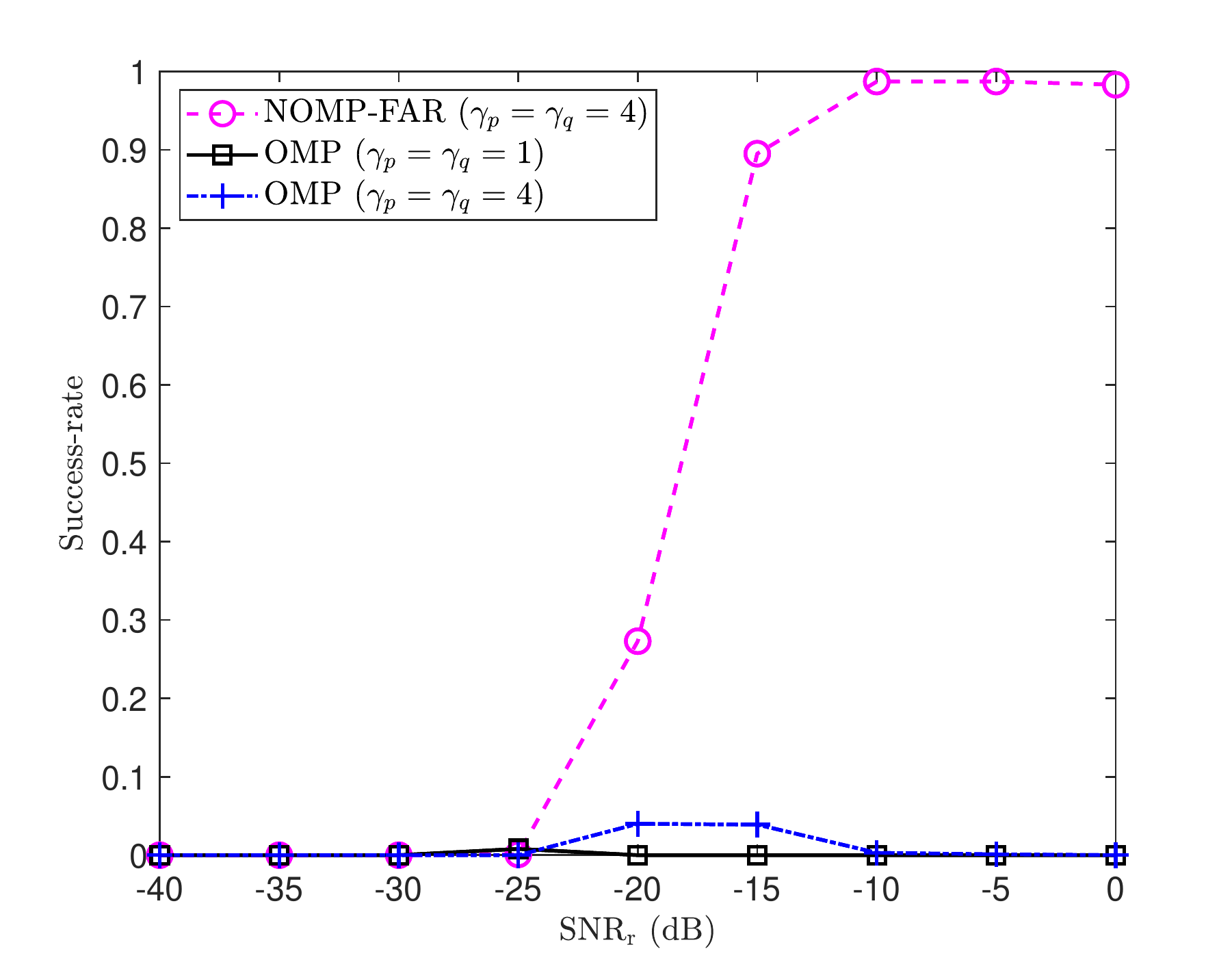}}
  \caption{Statistical performance comparison of NOMP-FAR and OMP. (a) Hit-rate versus ${\rm SNR}_{\rm r}$, (b) Success-rate versus ${\rm SNR}_{\rm r}$.}
  \label{HSMC} 
\end{figure*}
\begin{figure*}
  \centering
  \subfigure[]{
    \label{Rec15dB} 
    \includegraphics[width=65mm]{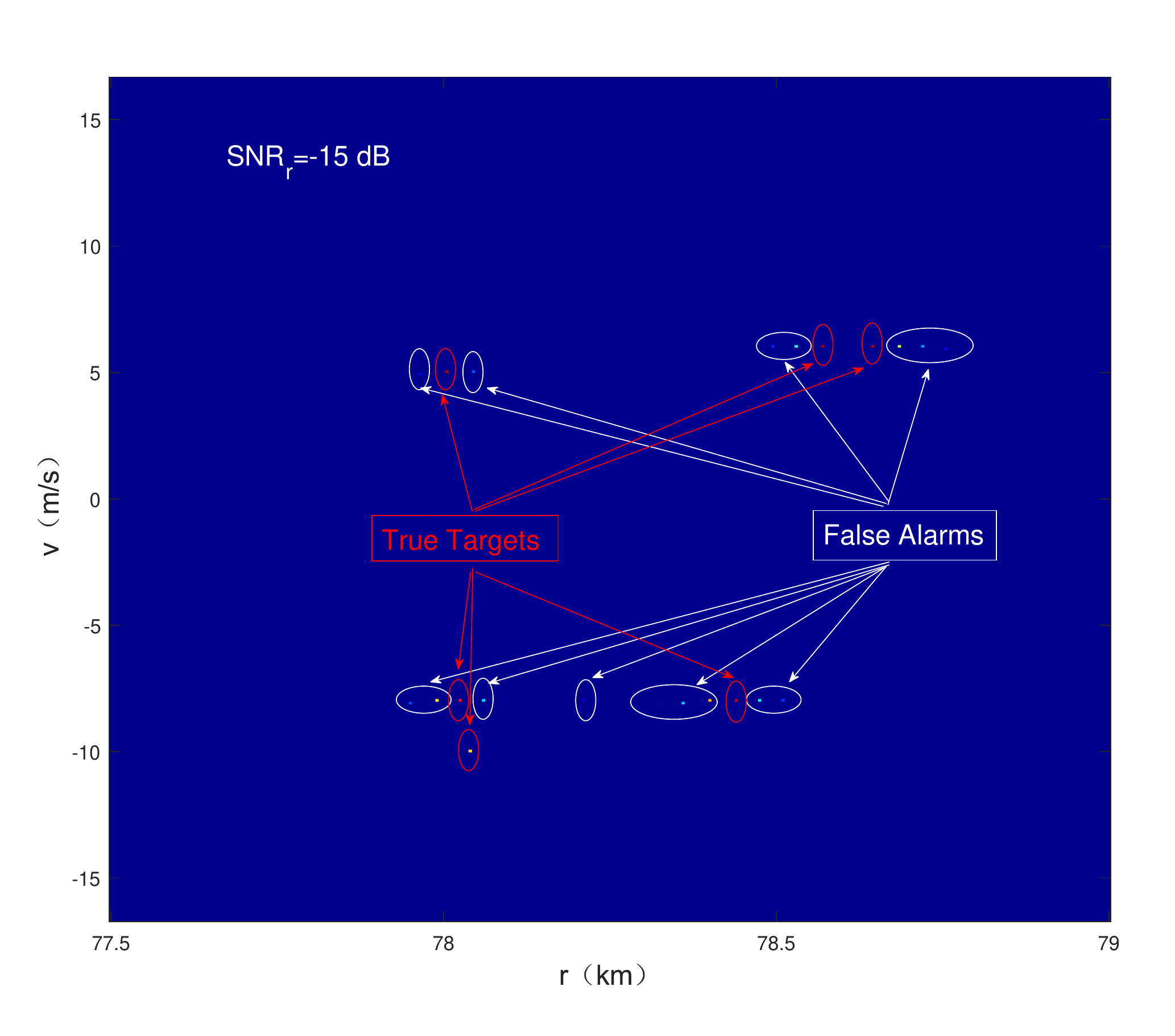}}
    \subfigure[]{
    \label{Rec15dBpost}
    \includegraphics[width=65mm]{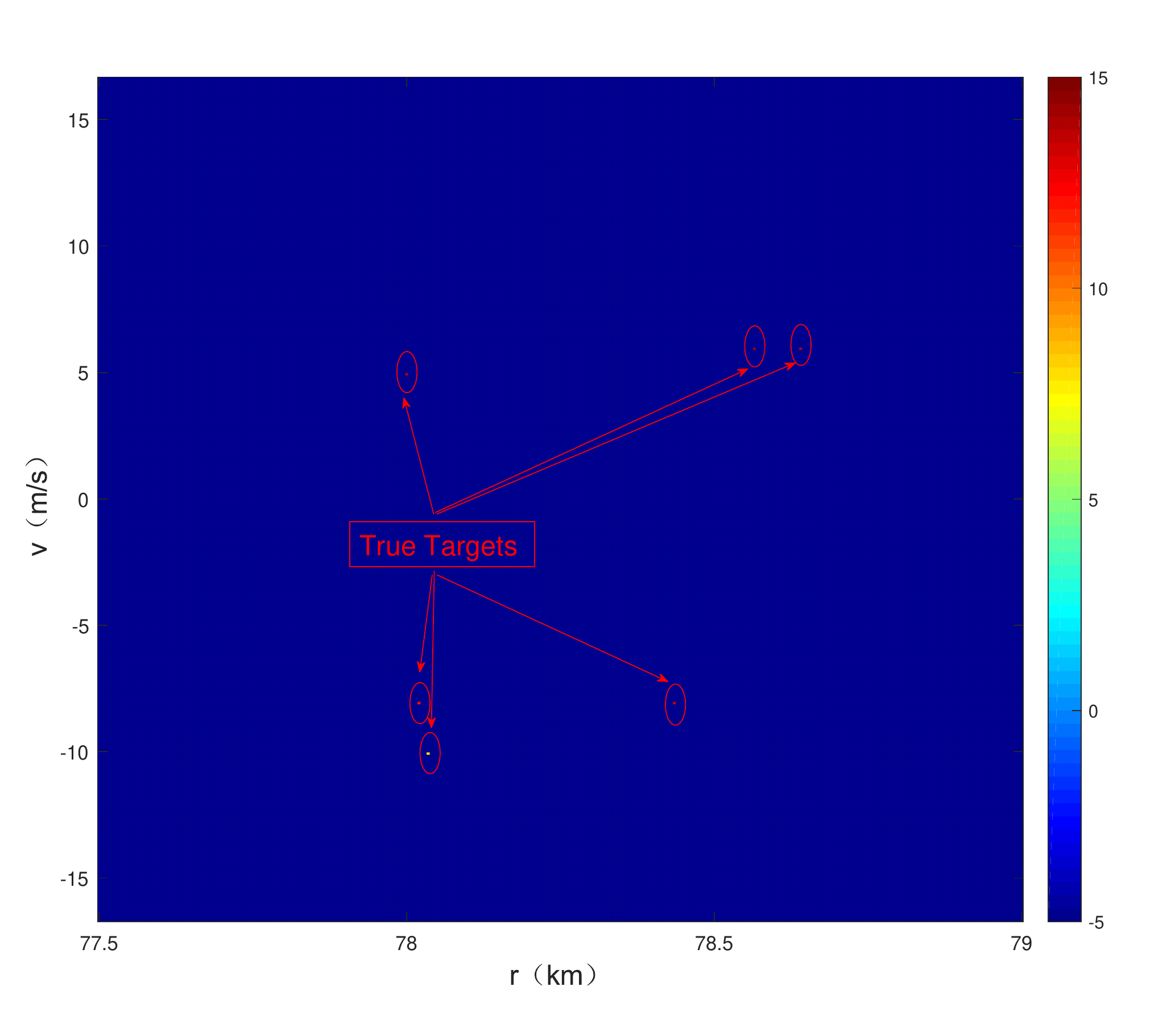}}
  \caption{The reconstruction results of NOMP-FAR and NOMP-FAR+Postprocessing at ${\rm SNR}_{\rm r}=-15$ dB in a single realization.}
  \label{Fullrangebin} 
\end{figure*}
\subsection{Effectiveness of Oversampling and Newton Refinement}\label{EON}
This experiment is conducted to demonstrate the effectiveness of oversampling and Newton refinement. Four targets are generated in the $2001$ range bin corresponding to the coarse range $75$km, and the relative ranges and velocities of the targets are uniformly and randomly drawn from the interval $[-18.75,18.75]$m and $[-16.67,16.67]$m/s, respectively. The true relative ranges and velocities of the targets are $[14,18,14,1]^{\rm T}$ m and $[-8.91,-16.29,-2.32,-3.26]^{\rm T}$m/s, respectively, and the corresponding reflection amplitudes are $[{\rm e}^{{\rm j}3.3},0.8{\rm e}^{{\rm j} 3},0.3{\rm e}^{{\rm j}3.5},0.2{\rm e}^{{\rm j}3.4}]^{\rm T}$. The SNR is ${\rm SNR}_{\rm r}=-20$ dB. The results are shown in Fig. \ref{OSandNRfig}. It can be seen that the number of targets estimated by OMP ($\gamma_p=\gamma_q=1$) and OMP ($\gamma_p=\gamma_q=4$) are $6$ and $5$, while NOMP-FAR correctly estimates the number of targets. For the estimation accuracy, the OMP ($\gamma_p=\gamma_q=1$) only estimates the first two targets with the strong reflection amplitude well, the OMP ($\gamma_p=\gamma_q=4$) performs better than OMP ($\gamma_p=\gamma_q=1$) as it estimates the first three targets with the strong reflection amplitude well. While for the NOMP-FAR, all the targets including the weakest target are estimated well. These demonstrate that oversampling and Newton refinement are beneficial for target reconstruction.
\subsection{Hit-rate and success-rate versus SNR}
The performances of NOMP-FAR and OMP versus the SNR ${\rm SNR}_{\rm r}$ are investigated, in terms of both the hit-rate and success-rate. The simulation parameters are the same as Subsection \ref{EON}. The number of MC trials is $10^3$. Results are shown in Fig. \ref{HSMC}. As shown in Fig. \ref{Hitrate}, the hit-rate increases as ${\rm SNR}_{\rm r}$ increases. For a fixed ${\rm SNR}_{\rm r}$, NOMP-FAR is the highest, followed by OMP ($\gamma_p=\gamma_q=1$) and OMP ($\gamma_p=\gamma_q=4$). The hit-rate of  OMP ($\gamma_p=\gamma_q=1$) is higher than that of OMP ($\gamma_p=\gamma_q=4$) is due to the two reasons: Firstly, the OMP ($\gamma_p=\gamma_q=1$) has estimated more number of targets than that of OMP ($\gamma_p=\gamma_q=4$). Secondly, the tolerances setting by OMP ($\gamma_p=\gamma_q=1$) is four times of that of the OMP ($\gamma_p=\gamma_q=4$). As for the success-rate shown in Fig. \ref{Sucs}, NOMP-FAR performs well, while the success-rates of the two OMPs are very low ($< 0.05$). In addition, NOMP-FAR increases as ${\rm SNR}_{\rm r}$ increases from $-40$ dB to $-10$ dB, and then begins to decrease slightly, due to the overestimating behaviour, where the recovery accuracy of NOMP-FAR and the model approximation error is large, compared to the noise level, and NOMP-FAR has to fit more number of targets to meet the stopping criterion.
\subsection{Full Range Bin Processing}
The range of interest is $[R_{\rm min},R_{\rm max}]=[77.5,79]$ km and there exists $K=6$ targets. The ranges $\mathbf r$ and velocities $\mathbf v$ of the six targets are ${\mathbf r}=[78005,78038,78025,78437.5,78570,78645]^{\rm T}$ m and ${\mathbf v}=[5,-10,-8,-8,6,6]^{\rm T}$ m/s. Note that the six targets lie in the bins ${\mathbf l}=\lfloor\frac{{\mathbf r}}{c/(2F_s)}+\frac{1}{2}\rfloor+1=[2081,2082,2082,2093,2096,2098]^{\rm T}$.
The amplitudes of the targets are $[1,0.5,-1,1.2,-1,1.2]^{\rm T}$ and the phases of the targets are drawn independently from the uniform distribution ${\mathcal U}(0,2\pi)$. Straightforward calculations show that the digital frequencies generated by the third and fourth targets satisfy (\ref{plreship}) and (\ref{qlreship}), and the range bin difference is $2093-2082=11$. While for the fifth and sixth targets, their digital frequencies also satisfy (\ref{plreship}) and (\ref{qlreship}), and the range bin difference is $2098-2096=2$. The SNR is ${\rm SNR}_{\rm r}=-15$ dB and results are shown in Fig. \ref{Fullrangebin}. It can be seen that directly running NOMP-FAR overestimates the number of targets significantly due to the sidelobes. Also, the velocities of the false targets are the same as the true target, as demonstrated in (\ref{qlreship}). After combining the postprocessing step, there is no false alarms and all the targets are detected even though the digital frequencies generated by the true targets are the same as that of the false targets. The root mean squared errors (RMSEs) of the range and velocity estimates are 0.0714m and 0.0146m/s, which demonstrate that the six targets are reconstructed with high accuracy.


\section{Conclusion}
This paper establishes the LFM FAR model and proposes a NOMP-FAR to perform HRRP and velocity estimation. The NOMP-FAR is extended to process the full range bins to avoid the false alarm caused by nearby range bins. Substantial numerical experiments are conducted to show that the proposed NOMP-FAR is able to accurately estimate the ranges and  velocities of the targets.


\begin{thebibliography}{1}
\bibitem{MensaHRbook}
D. L. Mensa, \emph{High Resolution Radar Cross-Section Imaging}. Norwood, MA: Artech Hourse, 1991.
\bibitem{Soumekhbook}
M. Soumekh, \emph{Synthetic Aperture Radar Signal Processing}. Wiley-Interscience, 1999.
\bibitem{LaneISAR}
T. Lane, ``Stepped-frequency and ISAR imaging systems,'' in \emph{Coherent Radar Performance Estimation}, J. A. Scheer and J. L. Kurtz, Eds, Norwood, MA: Artech Hourse, 1993, ch. 11.
\bibitem{GPR1999}
J. Fortuny amd A. J. Sieber, ``Three-dimensional synthetic aperture radar imaging of a fir tree: First results,'' \emph{IEEE Trans. Geosci. Remote Sens.}, vol. 37, no. 2, pp. 1006-1014, Mar. 1999.
\bibitem{Davis}
M. Davis, \emph{Foliage Penetration Radar: Detection and Characterization of
Objects under Trees}. The Institution of Engineering and Technology, 2011.
\bibitem{SFW1984}
T. H. Einstein, ``Generation of high resolution radar range profiles and range profile auto-correlation functions using stepped-frequency pulse trains,'' Lincoln Laboratory Project Report TT-54, October 1984.
\bibitem{GarNarTAES2002}
D. S. Garmatyuk and R. M. Narayanan, ``ECCM capabilities of an ultrawideband bandlimited random noise imaging radar,'' \emph{IEEE Trans. Aerosp. Electron. Syst.}, vol. 38, no. 4, pp. 1243-1255, 2002.
\bibitem{HuangTAES}
T. Huang, Y. Liu, H. Meng and X. Wang, ``Cognitive random stepped frequency radar with sparse recovery,'' \emph{IEEE Trans. Aerosp. Electron. Syst.}, vol. 50, no. 2, pp. 858-870, 2014.
\bibitem{RassTer}
K. V. Mishra, S. Mulleti and Y C. Eldar, ``RaSSteR: Random sparse step-frequency radar,'' arXiv preprint, arXiv:2004.05720v1.

\bibitem{HuangTSP2020MCAPAR}
T. Huang, N. Shlezinger, X. Xu, D. Ma, Y. Liu and Y. C. Eldar, ``Multi-carrier agile phased array radar,'' \emph{IEEE Trans. Signal Process.}, vol. 68, pp. 5706-5721, 2020.

\bibitem{MaSPM2020}
D. Ma, N. Shlezinger, T. Huang, Y. Liu and Y. C. Eldar, ``Joint radar-communication strategies for autonomous vehicles: Combining two key automotive technologies,'' \emph{IEEE Signal Process. Mag.}, vol. 37, no. 4, pp. 85-97, July 2020.
\bibitem{AxelssonTGRS2007}
S. R. J. Axelsson, ``Analysis of random step frequency radar and comparison with experiments,'' \emph{IEEE Trans. Geosci. Remote Sens.}, vol. 45, no. 4, pp. 890-904, 2007.
\bibitem{LiuEL2008}
Y. Liu, H. Meng, G. Li, and X. Wang, ``Range-velocity estimation of multiple targets in randomised stepped-frequency radar,'' \emph{Electron. Lett.},
vol. 44, no. 17, pp. 1032-1034, 2008.
\bibitem{OMPmethod}
J. A. Tropp and A. C. Gilbert, ``Signal recovery from random measurements via orthogonal matching pursuit,'' \emph{IEEE Trans. Inf. Theory}, vol. 53, no. 12, pp. 4655-4666, 2007.
\bibitem{HuangTSP2018}
T. Huang, Y. Liu, X. Xu, Y. C. Eldar and X. Wang, ``Analysis of frequency agile radar via compressed sensing,'' \emph{IEEE Trans. Signal Process.}, vol. 66, no. 23, pp. 6228-6240, Dec. 2018.
\bibitem{WangarXiv2019}
L. Wang, T. Huang, and Y. Liu, ``Theoretical analysis for extended target recovery in randomized stepped frequency radars,'' arXiv preprint, arXiv:1908.02929, 2019.
\bibitem{ChiTSP2011}
Y. Chi, L. L. Scharf, A. Pezeshki, and R. Calderbank, ``Sensitivity to basis mismatch in compressed sensing,''  \emph{IEEE Trans. Signal Process.}, vol. 59, no. 5, pp. 2182-2195, 2011.
\bibitem{FRSP2005}
M. A. Richards, \emph{Fundamentals of Radar Signal Processing}, McGraw-Hill, 2005.
\bibitem{NACRC2006}
V. S. Ryaben'kii and S. V. Tsynkov, \emph{A theoretical introduction to numerical analysis.} CRC press, 2006.
\bibitem{NOMPTSP2016}
B. Mamandipoor, D. Ramasamy and U. Madhow, ``Newtonized orthogonal matching pursuit: Frequency estimation over the continuum,'' \emph{IEEE Trans. Signal Process.}, vol. 64, no. 19, pp. 5066-5081, 2016.
\bibitem{MNOMPSP2019}
J. Zhu, L. Han, R. S. Blum and Z. Xu, ``Multi-snapshot Newtonalized orthogonal matching pursuit for line spectrum estimation with multiple measurement vectors,'' \emph{Signal Processing}, vol. 165, pp. 175-185, 2019.
\bibitem{CSSP2016}
Z. Marzi, D. Ramasamy and U. Madhow, ``Compressive channel estimation and tracking for large arrays in mm wave picocells,'' \emph{IEEE Journal of Selected Topics in Signal Processing}, vol. 10, no. 3, pp. 514-527, 2016.
\bibitem{Gupta2016}
A. Gupta, U. Madhow and A. Arbabian, ``Super-resolution in position and velocity estimation for short-range mm-wave radar,'' \emph{50th Asilomar Conference on Signals, Systems and Computers}, Nov. 2016, Pacifc Grove, USA.
\bibitem{Jinshi}
Y. Han, T-H. Hsu, C-K. Wen, and S. Jin, ``Efficient downlink channel reconstruction for FDD multi-antenna systems,'' \emph{IEEE Trans. Wireless Commun.}, vol. 18, no. 6, pp. 3161-3176, 2019.
\bibitem{Kaydetbook}
S. Kay, \emph{Fundamentals of Statistical Signal Processing: Detection
Theory.} Englewood Cliffs, NJ, USA: Prentice-Hall, 1998.
\bibitem{CVXbook}
S. Boyd, L. Vandenberghe, Convex Optimization, Cambridge University Press, 2004.
\end{thebibliography}
\end{document}